\newcommand{\rinfer}[2]{\infer{#2}{#1}}
\theoremstyle{plain}
\newtheorem{conjecture} [theorem]{Conjecture}
\newtheorem*{conjecture*} {Conjecture}
\theoremstyle{definition}
\newtheorem{question}[theorem]{Question}
\newtheorem*{question*}{Question}
\def\emph{\textbf}
\newcommand{\ie}{i.e.~}
\newcommand{\eg}{e.g.~}
\newcommand{\Mcomma}{\text{ ,}}
\newcommand{\Mdot}{\text{ .}}
\newcommand{\Miff}{\quad\quad\text{ iff }\quad\quad}
\newcommand{\defeq}{\colonequals} 
\newcommand{\tuple}[1]{\mathopen{\langle}#1\mathclose{\rangle}} 
\newcommand{\setdef}  [2]{\left\{#1 \mid #2\right\}}             
\newcommand{\enset}   [1]{\mathopen{ \{ }#1\mathclose{ \} }} 
\newcommand{\family}   [1]{\mathopen{ ( }#1\mathclose{ ) }} 
\newcommand{\fdec}    [3]{#1 \colon #2 \longrightarrow #3}
\renewcommand{\to}{\longrightarrow}
\newcommand{\pfn}{\relbar\joinrel\rightharpoonup}
\newcommand{\es}{\varnothing}
\newcommand{\join}{\vee}
\newcommand{\meet}{\wedge}
\newcommand{\bigjoin}{\bigvee}
\newcommand{\defIFF}{\; \ratio\Leftrightarrow \;}
\newcommand{\IMP}{\; \Rightarrow \;}
\newcommand{\AND}{\; \wedge \;}
\newcommand{\ocirc}{\circledcirc}
\newcommand{\CC}{\mathbb{C}}
\newcommand{\cat}[1]{\ensuremath{\mathbf{#1}}}
\newcommand{\ba}{\cat{BA}}
\newcommand{\pba}{\cat{pBA}}
\newcommand{\epba}{\cat{epBA}}
\newcommand{\Hilb}{\cat{Hilb}}
\newcommand{\XX}{\mathbf{X}}
\newcommand{\XGO}{\tuple{X,\adj,O}}
\newcommand{\adj}{{\frown}}
\DeclareMathOperator{\Clique}{\mathsf{Kl}}
\DeclareMathOperator{\Ev}{\mathcal{E}}
\newcommand{\vecs}{\mathbf{s}}
\newcommand{\vect}{\mathbf{t}}
\newcommand{\comm}{\odot}
\DeclareMathOperator{\C}{\mathcal{C}}
\newcommand{\CA}{\C(A)}
\newcommand{\lex}{\perp}
\newcommand{\Alex}{{A[\lex]}}
\newcommand{\Alexstar}{{A[\lex]^*}}
\newcommand{\Exc}{X}
\newcommand{\inl}{\imath}
\newcommand{\inr}{\jmath}
\newcommand{\ex}{{\downarrow}}
\newcommand{\eqBool}{\equiv_{\mathsf{Bool}}}
\newcommand{\AX}{A_{\XX}}
\newcommand{\BX}{B_{\XX}}
\DeclareMathOperator{\Proj}{\mathsf{P}}
\newcommand{\HH}{\mathcal{H}}
\newcommand{\KK}{\mathcal{K}}
\newcommand{\PH}{\Proj(\HH)}
\newcommand{\PK}{\Proj(\KK)}
\newcommand{\PHoK}{\Proj(\HH\otimes\KK)}
\newcommand{\vphi}{\varphi}
\newcommand{\One}{\mathbf{1}}
\newcommand{\Two}{\mathbf{2}}
\newcommand{\Four}{\mathbf{4}}
\title{The logic of contextuality} 
\titlerunning{The logic of contextuality}
\author{\href{https://www.cs.ox.ac.uk/samson.abramsky}{Samson Abramsky}}{Department of Computer Science, University of Oxford,
United Kingdom
}{samson.abramsky@cs.ox.ac.uk}{https://orcid.org/0000-0003-3921-6637}{This author acknowledges support from EPSRC -- Engineering and Physical Sciences Research Council, EP/T00696X/1, \textit{Resources and Coresources: a junction between categorical semantics and descriptive complexity}.}
\author{\href{https://www.cs.ox.ac.uk/people/rui.soaresbarbosa/rsb}{Rui Soares Barbosa}\footnote{This work was carried out in part while RSB was at the School of Informatics, the University of Edinburgh.}}
{INL -- International Iberian Nanotechnology Laboratory, 
Portugal
}{rui.soaresbarbosa@inl.int}{https://orcid.org/0000-0002-0465-8518}{This author acknowledges support from EPSRC -- Engineering and Physical Sciences Research Council, EP/R044759/1, \textit{Combining Viewpoints in Quantum Theory (Ext.)}, and from FCT -- Funda{\c{c}\~ao} para a Ci{\^e}ncia e a Tecnologia, CEECINST/00062/2018.}
\authorrunning{S. Abramsky and R.\,S. Barbosa}
\keywords{partial Boolean algebras, contextuality, exclusivity principles, Kochen--Specker paradoxes, tensor product}
\begin{document}

\maketitle

\begin{abstract}
Contextuality is a key signature of quantum non-classicality, which has been shown to play a central role in enabling quantum advantage for a wide range of information-processing and computational tasks.
We study the logic of contextuality from a structural point of view, in the setting of partial Boolean algebras introduced by Kochen and Specker in their seminal work.
These contrast with traditional quantum logic \`a la Birkhoff and von Neumann
in that operations such as conjunction and disjunction are partial, only being defined in the domain where they are physically meaningful.

We study how this setting relates to current work on contextuality such as the sheaf-theoretic and graph-theoretic approaches.
We introduce a general free construction extending the commeasurability relation on a partial Boolean algebra, \ie the domain of definition of the binary logical operations.
This construction has a surprisingly broad range of uses.
We apply it in the study of a number of issues, including:
\begin{itemize}
\item establishing the connection between the abstract measurement scenarios studied in the contextuality literature and the setting of partial Boolean algebras;
\item
formulating various contextuality properties in this setting, including probabilistic contextuality as well as the strong, state-independent notion of contextuality given by Kochen--Specker paradoxes, which are logically contradictory statements validated by partial Boolean algebras, specifically those arising from quantum mechanics;
\item 
 investigating a Logical Exclusivity Principle, and its relation to the Probabilistic Exclusivity Principle widely studied in recent work on contextuality
 as a step towards closing in on the set of quantum-realisable correlations;
 
\item
developing some work towards a logical presentation of the Hilbert space tensor product, using logical exclusivity to capture some of its salient quantum features.
\end{itemize}
\end{abstract}

\section{Introduction}
Kochen and Specker's seminal work on quantum contextuality used the formalism of partial Boolean algebras \cite{kochen1975problem}. In contrast to quantum logic in the sense of Birkhoff and von Neumann \cite{birkhoff1936logic}, partial Boolean algebras only admit physically meaningful operations.
In the key example of $\PH$, the projectors on a Hilbert space $\HH$, the operation of conjunction, \ie product of projectors, becomes a partial one, only defined on \emph{commuting} projectors.

In more recent work \cite{kochen2015reconstruction}, Kochen 
developed
a large part of the foundations of quantum theory in terms of partial Boolean algebras.
Heunen and van den Berg \cite{van2012noncommutativity} showed that every partial Boolean algebra is the colimit of its (total) Boolean subalgebras. Thus the topos approach to quantum theory \cite{isham1998topos} can be seen as a refinement, in explicitly categorical language, of the partial Boolean algebra approach.
In this paper, we relate partial Boolean algebras to current frameworks for contextuality, in particular the sheaf-theoretic \cite{AbramskyBrandenburger} and graph-theoretic \cite{CSW} approaches.

A major role in our technical development is played by a general universal construction for partial Boolean algebras, which freely generates a new partial Boolean algebra from a given one and extra commeasurability constraints (\cref{ssec:freeextension}, \cref{commexthm}). This result is proved constructively, using an inductive presentation by generators and relations. It is used throughout the paper as it provides a flexible tool, subsuming a number of other constructions: some previously known, and some new.

We describe a construction of partial Boolean algebras from graphical measurement scenarios, \ie  scenarios whose measurement compatibility structure is given by a binary compatibility relation, or graph. Empirical models, \ie correlations satisfying the no-signalling or no-disturbance principle, on these scenarios correspond bijectively to probability valuations, or states, on the corresponding partial Boolean algebras (\cref{ssec:measurementscenarios2pBAs}).

We then turn our attention to contextuality properties. We discuss how probabilistic contextuality is formulated in the setting of partial Boolean algebras (\cref{ssec:probcontextuality}), and show that the strong, state-independent form of contextuality considered by Kochen and Specker can be neatly captured using the universal construction mentioned above (\cref{ssec:ksproperty}, \cref{KScolimthm}).

We also consider questions of quantum realisability,
\ie aiming to characterise the logical structure of partial Boolean algebras of projectors on a Hilbert space,
and probability models that admit a Hilbert space realisation.
This motivates us to propose a \emph{Logical Exclusivity Principle} (LEP), which is always satisfied by partial Boolean algebras of the form $\PH$ (\cref{ssec:lep}).
We use a variant of our universal construction to show that there is a reflection between partial Boolean algebras and those satisfying LEP (\cref{ssec:reflection}, \cref{thm:reflection}).
We relate this Logical Exclusivity Principle to Specker's Exclusivity Principle for probabilistic models \cite{cabello2012specker}.
We show that if a partial Boolean algebra satisfies LEP, then all its states satisfy the Probabilistic Exclusivity Principle (PEP) (\cref{ssec:lepvpep}, \cref{prop:lepimpliespep}). Moreover,
we show that a state on a partial Boolean algebra satisfies PEP if it extends to one on its logically exclusive reflection, \ie the freely generated partial Boolean algebra satisfying LEP  (\cref{ssec:lepvpep}, \cref{thm:statesAlex}).

In a similar vein,
we consider the extent to which the tensor product operation on Hilbert spaces can be ``tracked'' by a corresponding operation on partial Boolean algebras. We first consider the tensor product described in \cite{van2012noncommutativity,kochen2015reconstruction}, which can be put in generator and relations form using our free construction (\cref{ssec:firsttensor}).
It is easily seen that it fails to capture all the relations holding in the partial Boolean algebra of projectors on the Hilbert space tensor product.
We then show that there is a natural monoidal structure on partial Boolean algebras satisfying LEP (\cref{ssec:leptensor}). This contrasts with the situation for standard contextuality models satisfying Specker's Exclusivity Principle, which are not closed under tensor product.
Both tensor product constructions above work by freely extending commeasurability starting from the coproduct of partial Boolean algebras. We show that such an operation never gives rise to Kochen--Specker paradoxes (\cref{ssec:tensor_ks}). This can be seen as a limitative result for using such an approach
to fully capture the Hilbert space tensor product in logical form, in terms of partial Boolean algebras.

We conclude with a discussion of some natural questions that arise from our results (\cref{sec:discussion}).

\section{Partial Boolean algebras}

\subsection{Basic definitions}

A partial Boolean algebra $A$ is given by a set (also written $A$), a reflexive, symmetric binary relation $\comm$ on $A$, read as ``commeasurability'' or ``compatibility'',
constants $0$ and $1$, a total unary operation $\neg$, and partial binary operations $\meet$ and $\join$ with domain $\odot$.
These must satisfy the following property: every set $S$ of pairwise-commeasurable elements must be contained in a set $T$ of pairwise-commeasurable elements which forms a (total) Boolean algebra\footnotemark\ under the restrictions of the given operations.

\footnotetext{From now on, whenever we say just ``Boolean algebra'', we mean total Boolean algebra.}

Morphisms of partial Boolean algebras are maps preserving commeasurability, and the operations wherever defined. This gives a category $\pba$.

Heunen and van den Berg show in \cite[Theorem 4]{van2012noncommutativity} that
 every partial Boolean algebra is the colimit, in $\pba$, of the diagram $\CA$ consisting of its Boolean subalgebras and the inclusions between them.

\subsection{Colimits and free extensions of commeasurability}\label{ssec:freeextension}

In \cite{van2012noncommutativity}, the category $\pba$ is shown to be cocomplete. Coproducts have a simple direct description. The coproduct $A \oplus B$ of partial Boolean algebras $A$, $B$ is their disjoint union with $0_A$ identified with $0_B$, and $1_A$ identified with $1_B$. Other than these identifications,  no commeasurability holds between elements of $A$ and elements of $B$. By contrast, coequalisers, and general colimits, are shown to exist in \cite{van2012noncommutativity} by an appeal to the Adjoint Functor Theorem. One of our technical contributions is to give a direct construction of the needed colimits, 
by an inductive presentation.\footnote{For a well-known discussion of the advantages of an explicit construction over an appeal to the Adjoint Functor Theorem, see \cite[p. \textit{xvii}]{johnstone1997topos}.}

More generally, we use this approach to prove the following result, which freely generates from a given partial Boolean algebra a new one where prescribed additional commeasurability relations are enforced between its elements.

\begin{theorem}
\label{commexthm}
Given a partial Boolean algebra $A$ and a binary relation $\ocirc$ on $A$, there is a partial Boolean algebra $A[\ocirc]$ such that:
\begin{itemize}
\item there is a $\pba$-morphism $\fdec{\eta}{A}{A[\ocirc]}$ satisfying $a \ocirc b \IMP \eta(a) \odot_{A[\ocirc]} \eta(b)$;
\item for every partial Boolean algebra $B$ and $\pba$-morphism $\fdec{h}{A}{B}$ satisfying $a \ocirc b \IMP h(a) \odot_B h(b)$, there is a unique $\pba$-morphism $\fdec{\hat{h}}{A[\ocirc]}{B}$ such that
$h = \hat{h} \circ \eta$, \ie such that the following diagram commutes
\[ \begin{tikzcd}
A \ar[rd, "h"'] \ar[r, "\eta"] 
 & A[\ocirc] \ar[d, " \hat{h}"] \\
 & B
\end{tikzcd}
\]
\end{itemize}
\end{theorem}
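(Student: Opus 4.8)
The plan is to build $A[\ocirc]$ explicitly by generators and relations, as a partial analogue of the usual presented-algebra construction, so that the universal property falls out of the freeness of the presentation. Because the operations $\meet$ and $\join$ are partial, we cannot first build a term algebra and then cut it down: we must interleave the formation of terms with the derivation of commeasurability, so that a conjunction $s \meet t$ is admitted only once $s \comm t$ has been established. Accordingly I would define, by a single simultaneous induction, three syntactic objects: a set of \emph{terms}, a \emph{commeasurability} relation $\comm$ on terms, and an \emph{equality} relation $=$ on terms. The carrier of $A[\ocirc]$ is then the set of terms quotiented by derivable equality, with $\comm$, $\neg$, $\meet$, $\join$, $0$, $1$ induced on equivalence classes, and $\eta$ sending $a \in A$ to the class of its generator. (One could alternatively obtain $A[\ocirc]$ by iterated pushouts of $A$ against the canonical map $\Four \oplus \Four \to \Two^{4}$ that makes two free generators commeasurable, invoking the cocompleteness of $\pba$; but the explicit presentation is more informative and is what I would use.)

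The generators are the elements of $A$ (written $\eta(a)$) together with $0$ and $1$; terms are closed under $\neg$, and under $\meet$ and $\join$ applied to pairs $s,t$ for which $s \comm t$ is derivable. The commeasurability relation is generated by the rules reflecting the intended semantics: it is reflexive and symmetric; $a \comm_A b \IMP \eta(a) \comm \eta(b)$ and $a \ocirc b \IMP \eta(a) \comm \eta(b)$; the constants $0,1$ are commeasurable with every term; $\comm$ is stable under $\neg$; and, crucially, whenever $r$ is commeasurable with each of a commeasurable pair $s,t$ we may conclude $r \comm (s \meet t)$ and $r \comm (s \join t)$ --- the syntactic reflection of the fact that pairwise-commeasurable elements lie in a common Boolean subalgebra. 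Equality is the congruence generated by all Boolean-algebra axioms applied to commeasurable terms, together with the full equational diagram of $A$ itself (every instance $\eta(a) \meet \eta(b) = \eta(c)$ whenever $a \comm_A b$ and $a \meet_A b = c$ in $A$, and likewise for $\join$ and $\neg$); this last family is exactly what forces $\eta$ to be a $\pba$-morphism.

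The substantive step --- and the one I expect to be the main obstacle --- is verifying that the quotient structure really is a partial Boolean algebra, \ie that every set of pairwise-commeasurable classes is contained in a subset that forms a (total) Boolean algebra. The difficulty is that commeasurability has been generated by binary rules, whereas the defining axiom demands joint compatibility of an entire pairwise-commeasurable family. The plan is to prove a coherence lemma by induction on term structure: any finite pairwise-commeasurable set of terms generates, under the operations, a set that is closed under them and on which derivable equality coincides with ordinary Boolean equality, hence a genuine Boolean algebra. This amounts to a soundness/conservativity result --- interpreting the presented terms in Boolean algebras and checking that the congruence neither identifies terms separable by Boolean reasoning inside a fixed commeasurable context nor collapses distinct contexts --- which rules out pathological identifications and secures the Boolean-subalgebra axiom. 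I would organise this as a normal-form argument for the generated relations.

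Granting the coherence lemma, the universal property is a routine structural induction. Given $\fdec{h}{A}{B}$ with $a \ocirc b \IMP h(a) \comm_B h(b)$, define $\hat{h}$ on terms by $\hat{h}(\eta(a)) = h(a)$, $\hat{h}(0)=0$, $\hat{h}(1)=1$, $\hat{h}(\neg t) = \neg \hat{h}(t)$, and $\hat{h}(s \meet t) = \hat{h}(s) \meet \hat{h}(t)$ (and dually for $\join$). One checks by induction that whenever $s \comm t$ is derivable then $\hat{h}(s) \comm_B \hat{h}(t)$ --- using that $h$ preserves commeasurability on the $A$-generators, sends $\ocirc$-pairs to commeasurable pairs by hypothesis, and that $B$, being a $\pba$, validates the commeasurability-closure rule --- so the right-hand sides are always defined; and that $\hat{h}$ respects each generating equality, since $B$ is a partial Boolean algebra and $h$ a homomorphism. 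Thus $\hat{h}$ descends to a well-defined $\pba$-morphism $\fdec{\hat{h}}{A[\ocirc]}{B}$ with $\hat{h} \circ \eta = h$. Uniqueness is immediate: any morphism satisfying $\hat{h} \circ \eta = h$ is fixed on generators and must commute with $\neg$, $\meet$, $\join$, and every element of $A[\ocirc]$ is the class of a term built from generators, so it agrees with the map just defined.
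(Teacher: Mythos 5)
Your construction is, in all essentials, the one in the paper: generators $\inl(a)$ for $a \in A$, commeasurability and equality relations generated by a simultaneous induction (the paper works with freely formed pre-terms plus a definedness predicate $\ex$, rather than interleaving term formation with derivations of commeasurability, but the two presentations are interchangeable), the carrier obtained as terms modulo derivable equality, and the universal property established by a structural induction showing that the evaluation map into $B$ is defined on all derivably-defined terms, preserves derivable commeasurability, and identifies derivably-equal terms; your uniqueness argument is also the paper's. One small omission: for $\odot$ to descend to equivalence classes you need a generating rule making equality compatible with commeasurability (the paper's rule that $t \equiv u$ and $u \odot v$ imply $t \odot v$); your list of rules for the commeasurability relation does not include it.

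The genuine problem is in the step you yourself call the main obstacle. The coherence lemma you propose to prove --- that on the set generated by a pairwise-commeasurable family, derivable equality coincides with ordinary Boolean equality, so that the congruence ``neither identifies terms separable by Boolean reasoning \ldots nor collapses distinct contexts'' --- is false, and no normal-form argument can establish it. The construction \emph{does} collapse contexts; that is its point. For instance, take any $A$ with the Kochen--Specker property (such as $\Proj(\CC^4)$) and $\ocirc = A^2$: then $A[\ocirc] = \One$ (see \cref{KScolimthm}), so $0 \equiv 1$ is derivable and \emph{all} terms are identified, even though $0 \eqBool 1$ does not hold and distinct generators are not separated by any Boolean reasoning you could hope to be conservative over. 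Fortunately, conservativity is not what the partial-Boolean-algebra axiom asks for; only the soundness direction is needed. Once the rules guarantee that every Boolean identity instantiated at pairwise-commeasurable terms is a derivable equality (the paper's substitution rule; your congruence generated by axiom instances yields the same, since all intermediate terms of an equational derivation stay inside the commeasurable context), the argument is immediate: for pairwise-commeasurable classes $[t_i]$, the set of classes of Boolean combinations of the $t_i$ is closed under the operations by your closure rules and satisfies all Boolean identities by the substitution rule, hence forms a Boolean algebra containing the given set. That this algebra may be degenerate, even equal to $\One$, is harmless: $\One$ is a Boolean algebra, and homomorphic images of Boolean algebras are Boolean. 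Replacing your conservativity lemma by this one-directional observation closes the gap, and the resulting proof is exactly the paper's.
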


We do not require that the relation $\ocirc$ include the commeasurability relation $\odot_A$ already defined on $A$. Of course, it is the case that
$A[\ocirc] \cong A[\odot_A \cup \ocirc]$ for any $\ocirc$, but it will be notationally convenient to allow an arbitrary relation $\ocirc$ in this construction. In particular, note that $A[\es] \cong A[\odot_A] \cong A$.

As already mentioned, this result is proved constructively, by giving proof rules for commeasurability and equivalence relations over a set of syntactic terms generated from $A$. In fact, we start with a set of ``pre-terms'' and also give rules for definedness.

We define the set of \emph{pre-terms} $P$ inductively, to be the closure of the set of generators
$G \, \defeq \, \setdef{\inl(a)}{a \in A}$
under the Boolean operations and constants.
The standard theory of Boolean algebras gives us an equational theory $\eqBool$  for terms in  the Boolean signature $\{ 0, 1, \wedge, \vee, \neg\}$ over variables $x, y, \ldots$
We have the usual notion of substitution of pre-terms for variables:
if $\vphi(\vec{x})$ is a Boolean term in the variables $x_1, \ldots , x_n$, and if $u_1, \ldots, u_n$ are pre-terms,
then $\vphi(\vec{u})$ is the pre-term which results from replacing $x_i$ by $u_i$ 
for all $i \in \enset{1, \ldots , n}$. 

We now define a predicate $\ex$ (definedness or existence), and binary relations $\comm$ and $\equiv$ on $P$, by the set of rules in \cref{tab:rules}.
To illustrate the first rule on the last line, consider the distributivity axiom: $x \wedge (y \vee z) \eqBool (x \wedge y) \vee (x \wedge z)$. Under the assumptions $t \odot u$, $u \odot v$, $t \odot v$, we can infer $t \wedge (u \vee v) \equiv (t \wedge u) \vee (t \wedge v)$. Note that in this rule  $\vphi(\vec{x})$ and $\psi(\vec{x})$ are pure Boolean terms, \ie they do not contain generators.

One can show the following by structural induction on derivations,
where $\vdash$ means derivability of an assertion from the rules.
\begin{lemma}
\label{prereqlemm}
For all pre-terms $t$ and $u$,
\begin{enumerate}
\item $\vdash t \odot u$ implies $\vdash t \ex$ and $\vdash u \ex$;
\item $\vdash t \equiv u$ implies $\vdash  t \ex$ and $\vdash u \ex$ and $\vdash t \odot u$.
\end{enumerate}
\end{lemma}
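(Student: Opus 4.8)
The plan is to prove both parts simultaneously by rule induction on the derivation, since the two statements are entangled: part~(2) asserts a commeasurability $t \odot u$, whose good behaviour is governed by part~(1), while the $\odot$-rules (through closure and replacement) may carry $\equiv$-assertions as premises, and the $\equiv$-rules carry $\odot$-assertions as premises. Concretely, I would induct on the height of a derivation $\mathcal{D}$ and case-split on its last rule, establishing: if $\mathcal{D}$ concludes $t \odot u$ then $\vdash t \ex$ and $\vdash u \ex$; and if $\mathcal{D}$ concludes $t \equiv u$ then $\vdash t \ex$, $\vdash u \ex$ and $\vdash t \odot u$. Derivations ending in a definedness judgement $t \ex$ carry no obligation, so those cases are vacuous.

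For the $\odot$-rules the argument is light and amounts to a bookkeeping match between the $\odot$-forming rules and the $\ex$-forming rules. The commeasurability axioms $\inl(a) \odot \inl(b)$ (for $a \ocirc b$) and the reflexivity/neutrality axioms come paired with matching definedness axioms, so $\vdash t \ex$ and $\vdash u \ex$ are immediate; symmetry reuses the induction hypothesis on its single premise; and for each closure rule forming a compound — say passing from $t \odot u$, $t \odot v$, $u \odot v$ to $t \odot (u \meet v)$ — the induction hypothesis supplies definedness of the constituents, after which the $\ex$-introduction rule firing on exactly the same commeasurability premises delivers definedness of the compound. The one case worth flagging is a replacement rule $t \equiv t',\ t \odot u \vdash t' \odot u$: here the part-(1) obligation for the conclusion $t' \odot u$ is discharged by appealing to the part-(2) induction hypothesis on the $\equiv$-premise (for $\vdash t' \ex$) together with the part-(1) hypothesis on the $\odot$-premise (for $\vdash u \ex$). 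This cross-dependence is precisely why the two parts must be proved in a single induction.

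The real work is in part~(2), where the delicate conjunct is always the commeasurability $\vdash t \odot u$ of the two sides. Reflexivity $t \equiv t$ needs only $t \odot t$, which follows from $t \ex$; symmetry transports $t \odot u$ to $u \odot t$ via symmetry of $\odot$. Transitivity is the first genuinely subtle point: from $t \equiv u$ and $u \equiv v$ the hypotheses yield $t \odot u$ and $u \odot v$, but I must produce $t \odot v$, and commeasurability is \emph{not} transitive; the step that rescues the argument is replacement, transporting $t \odot u$ along $u \equiv v$. The most substantial case is the Boolean-axiom scheme (the first rule on the last line): from the pairwise commeasurability of $t_1, \ldots, t_n$ I must show that both $\vphi(\vec t)$ and $\psi(\vec t)$ are defined and are commeasurable. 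I would isolate this as an auxiliary induction on the structure of Boolean terms, proving that if $t_1, \ldots, t_n$ are pairwise commeasurable then (i)~every Boolean term $\chi(\vec t)$ is defined, and (ii)~any two Boolean terms over the same commeasurable tuple are commeasurable — both by repeated application of the $\ex$- and $\odot$-closure rules. This auxiliary fact is the syntactic reflection of the defining partial-Boolean-algebra axiom that pairwise-commeasurable elements lie inside a common Boolean subalgebra, and the congruence rules for $\equiv$ are then discharged by combining it with the induction hypothesis on the premises. The main obstacle, then, is not any isolated rule but ensuring the rule set of \cref{tab:rules} is closed enough — in particular that replacement and the $\ex$/$\odot$-closure rules are present — so that these discharges go through; granting those rules, each case is routine once the simultaneous induction and the term-level auxiliary lemma are in place.
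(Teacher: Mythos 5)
Your proposal is correct and takes essentially the same approach as the paper, which proves the lemma by a simultaneous structural induction on derivations (the paper states this in one line without giving the case analysis). Your elaboration correctly pinpoints the genuinely delicate cases---the cross-dependence of the two parts, transitivity of $\equiv$ rescued by the replacement rule, and the Boolean-axiom scheme discharged by an auxiliary induction on pure Boolean terms---so it is a faithful filling-in of exactly that argument.
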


\begin{table}
\caption{Rules for free partial Boolean algebra with extended compatibility relation.}\label{tab:rules}
\begin{boxedminipage}{\textwidth}
\vspace{0.5\abovedisplayskip}\centering\noindent
$\begin{array}{c}
 \rinfer{a \in A}{\inl(a)\ex} \qquad \qquad \rinfer{a \odot_{A} b}{\inl(a) \odot \inl(b)} \qquad \qquad \rinfer{a \ocirc b}{\inl(a) \odot \inl(b)} 
 \\~\\
 \rinfer{}{0 \equiv \inl(0_A), \; 1 \equiv \inl(1_A)} \quad \rinfer{a \odot_{A} b}{\inl(a) \wedge \inl(b) \equiv \inl(a \wedge_{A} b), \; \inl(a) \vee \inl(b) \equiv \inl(a \vee_{A} b)} \quad \rinfer{}{\neg \inl(a) \equiv \inl(\neg_{A} a)}
 \\~\\
\rinfer{}{0 \ex, \; 1\ex} \qquad \qquad  \rinfer{t \odot u}{t \wedge u \ex, \; t \vee u \ex} \qquad \qquad \rinfer{t \ex}{\neg t \ex} 
\\~\\
\rinfer{t \ex}{t \odot t, \; t \odot 0, \; t \odot 1} \qquad \qquad \rinfer{t \odot u}{u \odot t} \qquad \qquad \rinfer{t \odot u, \; t \odot v, \; u \odot v}{t \wedge u \odot v, \; t \vee u \odot v}  \qquad \qquad \rinfer{t \odot u}{\neg t \odot u} 
\\~\\
\rinfer{t \ex}{t \equiv t} \qquad \qquad \rinfer{t \equiv u}{u \equiv t} \qquad \qquad \rinfer{t \equiv u,  \;  u \equiv v}{t \equiv v} \qquad \qquad \rinfer{t \equiv u, \; u \odot v}{t \odot v} 
\\~\\
\rinfer{\vphi(\vec{x}) \eqBool \psi(\vec{x}), \; \bigwedge_{i,j} u_i \odot u_j}{\vphi(\vec{u}) \equiv \psi(\vec{u})}  \qquad \qquad \rinfer{t \equiv t', \; u \equiv u', \; t \odot u}{t \wedge u \equiv t' \wedge u', \; t \vee u \equiv t' \vee u'}  \qquad \qquad \rinfer{t \equiv u}{\neg t \equiv \neg u}
\end{array}$
\vspace{0.5\belowdisplayskip}
\end{boxedminipage}
\end{table}

We define the set of terms $T \defeq \setdef{t \in P}{t\ex}$.
The relation $\equiv$ is an equivalence relation on $T$, by the rules in the fifth line.
We define a structure $A[\ocirc]$ as follows. The carrier is $T / {\equiv}$. The relation $\odot$ is defined by: $[t] \odot [u] \defIFF \; \vdash t \odot u$. This is well defined due to the last rule on the fifth line. The operations are defined by representatives: if $[t] \odot [u]$, then $[t] \wedge [u] \defeq [t \wedge u]$, etc. 
These are shown to be well defined using the congruence rules on the last line.
The first rule on the last line now suffices to show that any set of pairwise-commeasurable elements of $A[\ocirc]$
extends to a Boolean algebra, establishing the following proposition.

\begin{proposition}
 $A[\ocirc]$ is a partial Boolean algebra.
\end{proposition}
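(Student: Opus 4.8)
The plan is to verify each clause of the definition of partial Boolean algebra against the structure $A[\ocirc] = T/{\equiv}$, with essentially all the work concentrated in the final closure clause. The carrier, the constants $[0]$ and $[1]$, the total unary operation $\neg$ with $\neg[t] \defeq [\neg t]$, and the partial operations $[t] \wedge [u] \defeq [t \wedge u]$, $[t] \vee [u] \defeq [t \vee u]$ have already been exhibited, and their well-definedness follows from the congruence rules on the last line of \cref{tab:rules}. The relation $\odot$ is reflexive and symmetric: reflexivity is the rule $t \ex \IMP t \odot t$ (every representative lies in $T$, hence is defined), and symmetry is $t \odot u \IMP u \odot t$. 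Totality of $\neg$ uses $t \ex \IMP \neg t \ex$, and $\wedge, \vee$ are defined exactly on pairs with $[t] \odot [u]$ (using $t \odot u \IMP t \wedge u \ex, t \vee u \ex$), so their domain is precisely $\odot$, as required.

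The substantive part is the defining axiom: every set of pairwise-commeasurable elements of $A[\ocirc]$ must be contained in one forming a (total) Boolean algebra under the restricted operations. Given such a set, I would fix a family $\Sigma$ of representative pre-terms, one for each element, so that $\Sigma$ is pairwise commeasurable. The candidate extension is $B \defeq \setdef{[\vphi(\vec s)]}{\vphi \text{ a Boolean term and } \vec s \text{ a tuple from } \Sigma}$, the set of classes of all pre-terms generated from $\Sigma$ by the Boolean operations and constants. By construction $B$ contains the original set and, provided its elements turn out to be pairwise commeasurable, is closed under $\wedge, \vee, \neg, 0, 1$.

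The key lemma to establish first is that any two pre-terms generated from $\Sigma$ are commeasurable. I would prove this by induction on the \emph{combined} size of the two pre-terms, using the propagation rules: the cases $\neg t \odot u$ and $t \wedge u \odot v$, $t \vee u \odot v$ follow from the rules $t \odot u \IMP \neg t \odot u$ and $\{t \odot u,\, t \odot v,\, u \odot v\} \IMP t \wedge u \odot v$, whose hypotheses are all instances of the induction hypothesis at strictly smaller combined size (in particular $t \odot u$, needed even to form $t \wedge u$, is the instance at the pair $(t,u)$); the base case, where both pre-terms are generators or constants, is exactly the assumed pairwise commeasurability of $\Sigma \cup \enset{0,1}$. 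I expect this induction to be the main obstacle: the crucial point is to induct on total size rather than on a single pre-term, so that the premise $t \odot u$ is available from the hypothesis rather than circularly required.

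Granting the lemma, all pairs of elements of $B$ are commeasurable, so $\wedge$ and $\vee$ are total on $B$. It remains to check that $B$ satisfies the Boolean-algebra axioms. Each such axiom is an identity $\vphi(\vec x) \eqBool \psi(\vec x)$ of the theory $\eqBool$, and for any substitution $\vec s$ from $\Sigma$ the first rule on the last line of \cref{tab:rules} applies: its side condition $\bigwedge_{i,j} s_i \odot s_j$ holds by pairwise commeasurability of $\Sigma$, yielding $\vphi(\vec s) \equiv \psi(\vec s)$, so the axiom holds in $B$. Hence $B$ is a (total) Boolean algebra, under the restriction of the operations, containing the given set, which completes the verification that $A[\ocirc]$ is a partial Boolean algebra.
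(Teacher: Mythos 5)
Your proof is correct and takes essentially the same route as the paper: well-definedness of the quotient structure via the congruence rules, then closing a set of pairwise-commeasurable representatives under the Boolean operations and invoking the pure-Boolean-identity rule (first rule on the last line of \cref{tab:rules}) to get a total Boolean algebra. The paper's own proof is a one-line sketch that leaves implicit precisely the point you single out as the main obstacle --- the induction on combined size, using the fourth-line propagation rules, showing that all Boolean combinations of the representatives remain pairwise commeasurable (and hence defined) --- so your write-up is a faithful, more detailed rendering of the same argument.
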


There is a map $\fdec{\eta}{A}{A[\ocirc]}$ sending $a$ to $[\inl(a)]$.
By the rules on the first two lines, this is a $\pba$-morphism which moreover satisfies $a \ocirc b \IMP \eta(a) \odot \eta(b)$.

Now, given a partial Boolean algebra $B$ and a morphism $\fdec{h}{A}{B}$ such that $a \ocirc b \IMP h(a) \odot_B h(b)$, we shall show that there is a unique partial Boolean algebra morphism $\fdec{\hat{h}}{A[\ocirc]}{B}$ such that $h = \hat{h} \circ \eta$.

We define a partial map $\gamma \colon P \pfn B$ by structural recursion on pre-terms:
\begin{align*}
\gamma(\inl(a)) &\defeq h(a) 
&
\gamma(t \meet u) &\defeq \gamma(t) \meet_{B} \gamma(u)
\\
\gamma(\neg t) &\defeq \neg_{B} \gamma(t)
&
\gamma(t \join u) &\defeq \gamma(t) \join_{B} \gamma(u) 
\end{align*}
Note that this map is partial because the operations $\meet_{B}$ and $\join_{B}$ are.

\begin{proposition}
For all pre-terms $t$ and $u$, the following conditions hold:
\begin{enumerate}
\item $\vdash t \ex$ implies $\gamma(t)$ is defined;
\item $\vdash t \odot u$ implies  $\gamma(t) \odot_{B} \gamma(u)$;
\item $\vdash t \equiv u$ implies $\gamma(t) = \gamma(u)$.
\end{enumerate}
\end{proposition}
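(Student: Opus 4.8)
The plan is to prove the three statements simultaneously, by structural induction on the derivation of the judgement concerned. Because the rules are mutually recursive---a derivation of $t \odot u$ may rest on premises of the form $\cdot \equiv \cdot$, and vice versa---a single induction handling $\ex$, $\odot$ and $\equiv$ together is essential; one cannot prove the three claims in sequence. It is convenient to strengthen the statements of (2) and (3) so that they also assert definedness of the components: I read ``$\gamma(t) \odot_{B} \gamma(u)$'' in (2) as ``$\gamma(t)$ and $\gamma(u)$ are both defined and $\gamma(t) \odot_{B} \gamma(u)$'', and similarly ``$\gamma(t) = \gamma(u)$'' in (3) as asserting that both sides are defined and equal. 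With this reading every premise supplies, via the induction hypothesis, exactly the definedness needed to apply the corresponding operation of $B$.

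I would discharge the base cases and the purely congruential rules directly from the fact that $h$ is a $\pba$-morphism together with the defining recursion for $\gamma$. For the generator axioms on the first two lines: $\gamma(\inl(a)) = h(a)$ is always defined since $h$ is total; $\inl(a) \odot \inl(b)$ derived from $a \odot_{A} b$ is sound because $h$ preserves commeasurability, while the same conclusion derived from $a \ocirc b$ is sound precisely by the hypothesis $a \ocirc b \IMP h(a) \odot_{B} h(b)$; and the equations $0 \equiv \inl(0_A)$, $\inl(a) \wedge \inl(b) \equiv \inl(a \wedge_A b)$, $\neg \inl(a) \equiv \inl(\neg_A a)$, and so on, hold because $h$ preserves the constants and the operations where they are defined. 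The definedness rules on the third line, the symmetry/reflexivity/transitivity rules and the last rule of the fifth line ($t \odot v$ from $t \equiv u$ and $u \odot v$), and the two congruence rules of the last line, all reduce immediately by rewriting with the equalities and commeasurabilities delivered by the induction hypothesis, using only that $\odot_{B}$ is reflexive and symmetric and that equals may be substituted.

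The two steps that carry real content are the structural commeasurability rule (third rule, fourth line) and the Boolean-axiom rule (first rule, last line), and both I would handle by the defining property of the partial Boolean algebra $B$: any pairwise-commeasurable subset of $B$ is contained in a Boolean subalgebra. For the rule deriving $t \wedge u \odot v$ (and $t \vee u \odot v$, and $\neg t \odot u$) from pairwise commeasurability of $t, u, v$, the induction hypothesis makes $\{\gamma(t), \gamma(u), \gamma(v)\}$ pairwise commeasurable in $B$; choosing a Boolean subalgebra $C$ containing all three, the element $\gamma(t) \meet_{B} \gamma(u) = \gamma(t \wedge u)$ lies in $C$ and is therefore commeasurable with $\gamma(v)$, as required. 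For the Boolean-axiom rule, the hypotheses $\bigwedge_{i,j} u_i \odot u_j$ give, by the induction hypothesis, that $\{\gamma(u_i)\}_i$ is pairwise commeasurable, so it too sits inside a Boolean subalgebra $C$ of $B$.

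I expect the Boolean-axiom rule to be the main obstacle, because soundness there is not a single application of a $B$-operation but requires relating $\gamma$ of a compound Boolean term to its interpretation in $C$. The key auxiliary fact, which I would prove by a side induction on the structure of a pure Boolean term $\vphi$, is that whenever $\{\gamma(u_i)\}_i \subseteq C$ then $\gamma(\vphi(\vec{u}))$ is defined, lies in $C$, and coincides with the value of $\vphi$ computed using the genuine (total) Boolean operations of $C$; this uses that $C$ is closed under the operations and that the partial operations of $B$ restrict on $C$ to those total operations, so every intermediate subterm stays within the commeasurable region where $\gamma$ is defined. Granting this, the abstract Boolean identity $\vphi(\vec{x}) \eqBool \psi(\vec{x})$, being valid in the genuine Boolean algebra $C$, yields $\gamma(\vphi(\vec{u})) = \gamma(\psi(\vec{u}))$, completing that case and hence the induction.
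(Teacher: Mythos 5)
Your proof is correct and takes essentially the same approach as the paper: a single simultaneous structural induction on derivations, with the assertions $\gamma(t) \odot_{B} \gamma(u)$ and $\gamma(t) = \gamma(u)$ read as implying definedness, the routine rules checked as valid statements about $B$, and the Boolean-axiom rule identified as the key case and handled by extending the pairwise-commeasurable set $\{\gamma(u_1),\ldots,\gamma(u_n)\}$ to a Boolean subalgebra of $B$. The side induction on pure Boolean terms that you spell out is precisely the step the paper leaves implicit in asserting $\gamma(\vphi(\vec{u})) = \vphi^{B}(\gamma(u_1),\ldots,\gamma(u_n))$.
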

\begin{proof}
The proof goes by structural induction on derivations from the rules.
It suffices to verify that each rule is a valid statement about the partial Boolean algebra $B$ when assertions about $t$, $u$ are replaced by the corresponding assertions about $\gamma(t)$, $\gamma(u)$.
Note that $\gamma(t) \odot_{B} \gamma(u)$ and $\gamma(t) = \gamma(u)$ are taken to imply, in particular, that $\gamma(t)$ and $\gamma(u)$ are well-defined elements of $B$.

For example, the third rule on the fifth line (transitivity of $\equiv$)
gets translated to
\[ \rinfer{\gamma(t) = \gamma(u), \; \gamma(u) = \gamma(v)}{\gamma(t) = \gamma(v)}  \]
which simply expresses transitivity of equality.
Most other cases are similar.

The first rule on the last line is the least straightforward.
The induction hypothesis gives $\gamma(u_i) \odot_{B} \gamma(u_j)$
for all $i$ and $j$, \ie $\enset{\gamma(u_1), \ldots, \gamma(u_n)}$ is a set of pairwise-commeasurable elements in $B$.
It can therefore be extended to a Boolean subalgebra of $B$.
This implies that for any pure Boolean term $\vphi(\vec{x})$,
$\gamma(\vphi(u_1, \ldots, u_n)) = \vphi^{B}(\gamma(u_1),\ldots,\gamma(u_n))$ is well defined in $B$,
and moreover that $\gamma(\vphi(\vec{u}))=\gamma(\psi(\vec{u}))$ whenever $\vphi(\vec{x}) \eqBool \psi(\vec{x})$, as required.
\end{proof}

\begin{proof}[Proof of Theorem 1]
We can now establish the required universal property. We define $\hat{h}([t]) \defeq \gamma(t)$. It follows straightforwardly from the definition of $\gamma$ together with the previous proposition that this is well defined and has the required properties.
\end{proof}

This result will prove to be very useful in what follows.

\subsubsection*{Coequalisers and colimits}

A variation of this construction is also useful, where instead of just forcing commeasurability, one forces equality. Given a partial Boolean algebra $A$ and a relation $\ocirc$ as before, we write $A[\ocirc,\equiv]$ for the algebra generated by the above inductive construction, with one additional rule:
\[ \rinfer{a \ocirc a'}{\inl(a) \equiv \inl(a')} \]
We can define a $\pba$-morphism $\fdec{\eta}{A}{A[\ocirc, \equiv]}$ by $\eta(a) \defeq [\inl(a)]$. Clearly this satisfies $a \ocirc a' \IMP \eta(a) = \eta(a')$.
A simple adaptation of the proof of \cref{commexthm} establishes the following universal property of this construction.
\begin{theorem}
Let $\fdec{h}{A}{B}$ be a $\pba$-morphism such that $a \ocirc a' \IMP h(a) = h(a')$. Then there is a unique $\pba$-morphism $\fdec{\hat{h}}{A[\ocirc, \equiv]}{B}$ such that $h = \hat{h} \circ \eta$.
\end{theorem}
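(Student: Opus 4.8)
The plan is to mimic the proof of \cref{commexthm} almost verbatim, since $A[\ocirc,\equiv]$ is presented by exactly the same rules as $A[\ocirc]$ together with the single extra rule deriving $\inl(a)\equiv\inl(a')$ from $a\ocirc a'$. First I would note that $A[\ocirc,\equiv]$ is a partial Boolean algebra by the same reasoning as before: the new rule only adds instances to $\equiv$, which remains an equivalence relation and a congruence (the relevant rules on the last two lines of \cref{tab:rules} are unchanged), and since the added identifications are between generators they do not interfere with the argument that any set of pairwise-commeasurable classes extends to a Boolean subalgebra. The morphism $\fdec{\eta}{A}{A[\ocirc,\equiv]}$, $\eta(a)\defeq[\inl(a)]$, then satisfies $a\ocirc a'\IMP\eta(a)=\eta(a')$ directly from the new rule.

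Next I would carry over the map $\gamma\colon P\pfn B$ defined by the same structural recursion on pre-terms, and re-establish the analogue of the proposition preceding \cref{commexthm}: for all pre-terms $t,u$, (1)~$\vdash t\ex$ implies $\gamma(t)$ is defined, (2)~$\vdash t\odot u$ implies $\gamma(t)\odot_B\gamma(u)$, and (3)~$\vdash t\equiv u$ implies $\gamma(t)=\gamma(u)$. The induction on derivations is identical to the earlier one for every rule except the new one, so the only genuinely new case — and the crux of the whole adaptation — is to verify that the new rule is sound for $\gamma$. Concretely, if $\inl(a)\equiv\inl(a')$ is derived from $a\ocirc a'$, then I must check $\gamma(\inl(a))=\gamma(\inl(a'))$, \ie $h(a)=h(a')$; but this is precisely the strengthened hypothesis on $h$. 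I would also remark that this hypothesis subsumes the one needed in \cref{commexthm}: the retained rule forcing $\inl(a)\odot\inl(b)$ from $a\ocirc b$ requires $h(a)\odot_B h(b)$, which now follows since $a\ocirc b$ gives $h(a)=h(b)$ and commeasurability is reflexive.

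With this proposition in hand, the remainder is routine. I would define $\hat{h}([t])\defeq\gamma(t)$; this is well defined by clause~(3), is a $\pba$-morphism by clauses~(1) and~(2) together with the definition of $\gamma$ on the operations, and satisfies $\hat{h}(\eta(a))=\gamma(\inl(a))=h(a)$, giving $h=\hat{h}\circ\eta$. For uniqueness, any $\pba$-morphism factoring $h$ through $\eta$ is forced to send each generator class $[\inl(a)]$ to $h(a)$, and since every element of $A[\ocirc,\equiv]$ is the class of a term built from generators by the Boolean operations and constants, the morphism conditions determine it uniquely on all of $A[\ocirc,\equiv]$.

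The only real obstacle is this single soundness check for the new rule, which is exactly where passing from the commeasurability hypothesis of \cref{commexthm} to the equality hypothesis here is essential; everything else transfers unchanged, justifying the claim that it is a simple adaptation.
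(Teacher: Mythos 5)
Your proposal is correct and takes essentially the same approach as the paper: the paper's entire proof is the remark that ``a simple adaptation of the proof of \cref{commexthm} establishes the following universal property,'' and your write-up supplies exactly that adaptation. In particular, you correctly identify the one genuinely new point --- soundness of the added rule for $\gamma$, which is precisely where the strengthened hypothesis $a \ocirc a' \IMP h(a) = h(a')$ is used --- and correctly observe that this hypothesis also covers the retained rule deriving $\inl(a) \odot \inl(b)$ from $a \ocirc b$, via reflexivity of commeasurability.
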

This result can be used to give an explicit construction of coequalisers, and hence general colimits, in $\pba$. Given a diagram 
\[ \begin{tikzcd}
A \arrow[r, shift left, "f"] 
\arrow[r, shift right, "g"'] 
& B
\end{tikzcd}
\]
in $\pba$, we define a relation $\ocirc$ on $B$ by $b \ocirc b' \defeq \exists a \in A. \, f(a) = b \AND g(a) = b'$. Then, $\fdec{\eta}{B}{B[\ocirc,\equiv]}$ is the coequaliser of $f$ and $g$.

\subsection{States on partial Boolean algebras}

\begin{definition}\label{def:probvaluation}
A \emph{state} or \emph{probability valuation} on a partial Boolean algebra $A$ is a map $\fdec{\nu}{A}{[0,1]}$
such that:
\begin{enumerate}
\item\label{p:zero} $\nu(0) = 0$;
\item\label{p:complement} $\nu(\lnot x) = 1 - \nu(x)$;
\item\label{p:joinmeet} for all $x,y \in A$ with $x \odot y$, $\nu(x \join y) + \nu(x \meet y) = \nu(x) + \nu(y)$.
\end{enumerate}
\end{definition}

\begin{proposition}
A map $\fdec{\nu}{A}{[0,1]}$ is a state iff for every Boolean subalgebra $B$ of $A$, the restriction of $\nu$ to $B$ is a finitely additive probability measure on $B$.
\end{proposition}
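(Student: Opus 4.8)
The plan is to exploit the defining extension property of partial Boolean algebras — that every pairwise-commeasurable set is contained in a Boolean subalgebra — in order to translate each of the three conditions of \cref{def:probvaluation} into a statement living entirely within a single Boolean subalgebra $B$, where being a state and being a finitely additive probability measure coincide by elementary Boolean reasoning. Throughout I would use that a Boolean subalgebra $B$ is closed under the operations, so that $0$, $1$, $\neg x$, $x \meet y$ and $x \join y$ computed in $B$ agree with those computed in $A$, and that any two elements of $B$ are commeasurable.

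For the forward implication, I would assume $\nu$ is a state and fix a Boolean subalgebra $B$. Then $\nu(0) = 0$ by condition~\ref{p:zero}, and $\nu(1) = \nu(\neg 0) = 1 - \nu(0) = 1$ by condition~\ref{p:complement}. For $x,y \in B$ we have $x \odot y$, since all elements of $B$ are pairwise commeasurable, so condition~\ref{p:joinmeet} gives the inclusion--exclusion identity $\nu(x \join y) + \nu(x \meet y) = \nu(x) + \nu(y)$; specialising to the case $x \meet y = 0$ yields finite additivity $\nu(x \join y) = \nu(x) + \nu(y)$. Hence $\nu|_B$ is a finitely additive probability measure.

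For the converse I would assume $\nu|_B$ is a finitely additive probability measure for every Boolean subalgebra $B$ and verify the three conditions in turn. Since even the empty set is pairwise commeasurable, the extension property guarantees that every element lies in some Boolean subalgebra. Condition~\ref{p:zero} is immediate, as every measure sends $0$ to $0$. For condition~\ref{p:complement}, given $x$ I would apply the extension property to $\enset{x}$ to obtain a Boolean subalgebra $B \ni x$; then $\neg x \in B$, and from $x \join \neg x = 1$, $x \meet \neg x = 0$ and additivity we get $\nu(x) + \nu(\neg x) = \nu(1) = 1$. For condition~\ref{p:joinmeet}, given $x \odot y$ I would apply the extension property to $\enset{x,y}$ to obtain a Boolean subalgebra containing both, hence also $x \join y$ and $x \meet y$; the identity $\nu(x \join y) + \nu(x \meet y) = \nu(x) + \nu(y)$ then holds because it is a standard consequence of finite additivity within any Boolean algebra (decompose $x \join y$ and $y$ into disjoint parts).

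I do not expect a deep obstacle here: the real content of the statement is exactly this reduction to individual Boolean subalgebras. The one point requiring care is to ensure at each step that all the elements involved — in particular the outputs $x \join y$, $x \meet y$ and $\neg x$ of the partial operations — genuinely lie together in a single Boolean subalgebra, which is precisely what the extension property in the definition of a partial Boolean algebra supplies. The only measure-theoretic input is the routine equivalence, inside a fixed Boolean algebra, between additivity on disjoint elements and the inclusion--exclusion form appearing in condition~\ref{p:joinmeet}.
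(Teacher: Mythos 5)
Your proposal is correct, and it is precisely the routine argument the paper has in mind: the proposition is stated without proof there because the reduction to individual Boolean subalgebras (via the defining extension property applied to $\es$, $\enset{x}$, and $\enset{x,y}$, plus the equivalence inside a Boolean algebra between disjoint additivity and the inclusion--exclusion identity) is considered immediate. Your write-up supplies exactly that argument, with the key closure and commeasurability points handled correctly.
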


\begin{lemma}\label{lemma:familyvaluation}
Let $A$ be a partial Boolean algebra.
There is a one-to-one correspondence between:
\begin{itemize}
\item states on $A$;
\item families $\family{\nu_S}_{S \in \CA}$ indexed by the Boolean subalgebras $S$ of $A$, where $\nu_S$ is a finitely additive probability measure on $S$ and $\nu_S = \nu_T \circ \iota_{S,T}$ whenever $S \subseteq T$.
\end{itemize}
\end{lemma}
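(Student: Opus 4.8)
The plan is to exhibit mutually inverse maps between the two sets, leaning throughout on the proposition just proved, which characterises states as exactly those maps $\fdec{\nu}{A}{[0,1]}$ whose restriction to every Boolean subalgebra $S$ of $A$ is a finitely additive probability measure on $S$.

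In the forward direction, given a state $\nu$ on $A$, I would set $\nu_S \defeq \nu|_S$ for each $S \in \CA$. By the cited proposition each $\nu_S$ is a finitely additive probability measure on $S$, and the coherence condition $\nu_S = \nu_T \circ \iota_{S,T}$ for $S \subseteq T$ is immediate, since both sides send $x \in S$ to $\nu(x)$. This yields a coherent family $\family{\nu_S}_{S \in \CA}$.

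In the backward direction, given a coherent family $\family{\nu_S}_{S \in \CA}$, I would define $\nu(a) \defeq \nu_S(a)$ for any Boolean subalgebra $S$ containing $a$. The crux — and the only real content of the lemma — is well-definedness: the value must not depend on the chosen $S$. Here I would use that any single element $a$ sits inside a \emph{smallest} Boolean subalgebra, namely $\langle a\rangle \defeq \enset{0, 1, a, \neg a}$ (with possible identifications in degenerate cases). That $\langle a\rangle$ is a genuine total Boolean subalgebra follows from the defining property of partial Boolean algebras: the set $\enset{a, \neg a}$ is pairwise commeasurable (reflexivity gives $a \comm a$, and the rule $t \comm u \IMP \neg t \comm u$ gives $a \comm \neg a$), hence extends to some Boolean subalgebra, within which $a$ and $\neg a$ generate precisely $\langle a\rangle$. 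Since every Boolean subalgebra $S$ with $a \in S$ contains $0$, $1$ and, being closed under $\neg$, also $\neg a$, we have $\langle a\rangle \subseteq S$; coherence applied to this inclusion then gives $\nu_S(a) = \nu_{\langle a\rangle}(a)$. Thus $\nu_S(a)$ takes the common value $\nu_{\langle a\rangle}(a)$ for all $S \ni a$, so $\nu$ is well-defined.

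It then remains to check that $\nu$ is a state and that the two assignments are mutually inverse. By construction $\nu|_S = \nu_S$ for every $S \in \CA$, since for $x \in S$ one may compute $\nu(x)$ using $S$ itself; as each $\nu_S$ is a finitely additive probability measure, the cited proposition shows $\nu$ is a state. Mutual inverseness is now routine: starting from a state $\nu$, restricting and then reassembling returns the value $\nu_{\langle a\rangle}(a) = \nu(a)$, while starting from a family $\family{\nu_S}$, the reassembled state restricts back to $\nu|_S = \nu_S$. I expect the well-definedness step — identifying the least Boolean subalgebra through which all the $\nu_S$ are forced to agree — to be the main point, with everything else being a direct appeal to the preceding proposition.
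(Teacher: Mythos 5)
Your proposal is correct and takes the intended route: the paper states this lemma without an explicit proof, treating it as an immediate consequence of the preceding proposition, and your restriction/gluing argument --- with the minimal Boolean subalgebra $\enset{0,1,a,\neg a}$ securing well-definedness of the glued map --- is precisely the natural filling-in of those details. One small nit: the fact $a \comm \neg a$ should be justified directly from the partial Boolean algebra axiom (the singleton $\enset{a}$ extends to a Boolean subalgebra, which contains $\neg a$ and whose elements are pairwise commeasurable) rather than by citing the rule from Table~1, which belongs to the free-construction term calculus, not to the axioms of a general partial Boolean algebra; your argument already contains this derivation in substance, so nothing is missing.
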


\begin{lemma}\label{lemma:total:valuation-distribution}
Let $A$ be a finite Boolean algebra.
There is a one-to-one correspondence between states  on $A$ and probability distributions on its set of atoms.
\end{lemma}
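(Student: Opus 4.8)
The plan is to exhibit the correspondence explicitly and check that it is a bijection, leaning on the structure theory of finite Boolean algebras. Write $\mathrm{At}(A)$ for the set of atoms of $A$. I shall use two standard facts: every $x \in A$ is the join $x = \bigvee \setdef{a \in \mathrm{At}(A)}{a \leq x}$ of the atoms below it, and distinct atoms are disjoint (their meet is $0$). Moreover, since $A$ is a Boolean algebra every pair of elements is commeasurable, so by the preceding Proposition a state on $A$ is precisely a finitely additive probability measure, and condition~\ref{p:joinmeet} of \cref{def:probvaluation} applies to \emph{all} pairs. In one direction, given a state $\nu$ I define a distribution $d_\nu \colon \mathrm{At}(A) \to [0,1]$ by $d_\nu(a) \defeq \nu(a)$; in the other, given a probability distribution $d$ on $\mathrm{At}(A)$ I define $\nu_d \colon A \to [0,1]$ by $\nu_d(x) \defeq \sum_{a \leq x} d(a)$, the sum ranging over atoms $a$ below $x$.

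The crux of the argument, and the only place where real work occurs, is a generalised additivity statement: for a state $\nu$ and pairwise-disjoint elements $x_1, \ldots, x_n$, one has $\nu(\bigvee_i x_i) = \sum_i \nu(x_i)$. I would prove this by induction on $n$ from condition~\ref{p:joinmeet}. The inductive step uses distributivity to compute $(\bigvee_{i<n} x_i) \wedge x_n = \bigvee_{i<n}(x_i \wedge x_n) = 0$, whence $\nu(\bigvee_{i \leq n} x_i) = \nu(\bigvee_{i<n} x_i) + \nu(x_n) - \nu(0)$, and condition~\ref{p:zero} gives $\nu(0) = 0$. Applying this to the (pairwise-disjoint) atoms whose join is $1$ yields $\sum_{a \in \mathrm{At}(A)} \nu(a) = \nu(1) = 1$, where $\nu(1) = \nu(\neg 0) = 1 - \nu(0) = 1$ by condition~\ref{p:complement}. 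This shows $d_\nu$ is genuinely a probability distribution.

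With generalised additivity established, the remaining checks are routine. To see $\nu_d$ is a state I verify the three axioms: $\nu_d(0) = 0$ since no atom lies below $0$; $\nu_d(\neg x) = \sum_{a \not\leq x} d(a) = 1 - \nu_d(x)$, since the atoms below $\neg x$ are exactly those not below $x$; and $\nu_d(x \vee y) + \nu_d(x \wedge y) = \nu_d(x) + \nu_d(y)$ by inclusion–exclusion for sums applied to the sets $\setdef{a}{a \leq x}$ and $\setdef{a}{a \leq y}$, noting that the atoms below $x \vee y$ and below $x \wedge y$ are respectively their union and their intersection.

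Finally I show the two assignments are mutually inverse. On the one hand $\nu_{d_\nu}(x) = \sum_{a \leq x} \nu(a) = \nu(\bigvee_{a \leq x} a) = \nu(x)$, using generalised additivity together with the fact that $x$ is the join of the atoms below it. On the other hand $d_{\nu_d}(a) = \nu_d(a) = \sum_{a' \leq a} d(a') = d(a)$, since $a$ is the unique atom below itself. Thus $\nu \mapsto d_\nu$ and $d \mapsto \nu_d$ are inverse bijections, as claimed. The main obstacle is purely the generalised additivity lemma; once binary additivity from \cref{def:probvaluation} is lifted to arbitrary finite disjoint joins, everything else is bookkeeping over the atoms.
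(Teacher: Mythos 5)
Your proof is correct and follows essentially the same route as the paper's: restrict a state to the atoms in one direction, extend a distribution by summing over the atoms below each element in the other, with the key step being finite additivity over pairwise-disjoint elements proved by induction from conditions~\ref{p:zero} and~\ref{p:joinmeet} of \cref{def:probvaluation}. The only difference is that you spell out the verifications (that $\nu_d$ is a state, and that the two assignments are mutually inverse) which the paper leaves implicit.
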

\begin{proof}
Write $X$ for the set of atoms of $A$.
If $\fdec{\nu}{A}{[0,1]}$ is a state on $A$,
then 
\[
\sum_{x \in X}\nu(x) = \nu\left(\bigjoin X\right) \]
can be shown by induction on the size of $X$, using \cref{def:probvaluation}--\ref{p:zero} for the base case,
and using \cref{def:probvaluation}--\ref{p:joinmeet},\ref{p:zero} and the fact that $x \meet y = 0$ when $x$ and $y$ are distinct atoms for the induction step.
Since  $\bigjoin X = 1$, we conclude that  $\sum_{x \in X}\nu(x) = 1$ and so 
$\fdec{\nu|_X}{X}{[0,1]}$ is a probability distribution on $X$.

Conversely, if $\fdec{d}{X}{[0,1]}$ is a probability distribution, we extend it to the whole Boolean algebra
using the fact that any element is uniquely written
as the join of a set of atoms, as follows: for any $S \subseteq X$,
\[\nu\left(\bigjoin S\right) \defeq \sum_{x \in S} d(x) \Mdot\]
\end{proof}

\section{Graphical measurement scenarios and partial Boolean algebras}
\label{sec:measurementscenarios2pBAs}

\subsection{Measurement scenarios and (no-signalling) empirical models}

We consider the basic framework of the sheaf-theoretic approach introduced in \cite{AbramskyBrandenburger} to provide a unified perspective on non-locality and contextuality.
Our focus here will not be solely on the question of contextuality,
but also on principles that approximate the set of quantum-realisable behaviours.

Measurement scenarios provide an abstract notion of an experimental setup.
They model a situation
where there is a set of measurements, or queries, one can perform on a system,
but not all of which may be performed simultaneously.

In this paper, we focus on what we term `graphical' scenarios, where a subset of measurements is compatible (\ie can be performed together) if its elements are pairwise compatible. Hence, compatibility is specified simply by a binary relation.
A paradigmatic example is quantum theory, where compatibility is given by commutativity: a set of measurements (observables) can be performed together if and only if its elements commute pairwise.

Note that, in contrast to \cite{AbramskyBrandenburger}, we do not require that the set of measurements be finite. We do, however, consider only measurements with a finite set of outcomes.
This allows us to include within the scope of our discussion the scenario formed by all the quantum-mechanical observables on 
a system described by a finite-dimensional Hilbert space.

\begin{definition}
A  \emph{graphical measurement scenario} is a triple $\XGO$ consisting of:
\begin{itemize}
    \item a set $X$ of measurements,
    \item a reflexive, symmetric relation $\adj$ on $X$, indicating compatibility of measurements.
    \item a family $\family{O_x}_{x\in X}$ assigning a finite set $O_x$ of outcomes to each measurement $x \in X$.
\end{itemize}
A \emph{context} is a subset of measurements $\sigma \subseteq X$
that are pairwise compatible, \ie a clique of the relation $\adj$. We write $\Clique(\adj)$ for the set of contexts.
\end{definition}

A particular case of interest is that of measurement scenarios where every measurement is dichotomic, \ie has two possible outcomes.

Given a measurement scenario, an empirical model specifies particular probabilistic observable behaviour that may be displayed by a physical system.

\begin{definition}\label{def:empiricalmodel}
Let $\XGO$ be a measurement scenario.
A (no-signalling) \emph{empirical model} is a family
$\family{e_\sigma}_{\sigma \in \Clique(\adj)}$
where for each context $\sigma \in \Clique(\adj)$,
$e_\sigma$ is a probability distribution on the set
$\Ev(\sigma) \defeq \prod_{x\in\sigma}O_x$
of joint assignments of outcomes to the measurements in $\sigma$,
and such that $e_\sigma = e_\tau|_\sigma$ whenever $\sigma$ and $\tau$ are contexts with $\sigma \subseteq \tau$, where $e_\tau|_\sigma$ is marginalisation of distributions given as follows: for any $\vecs\in\Ev(\sigma)$,
\[e_\tau|_\sigma(\vecs) \defeq \sum_{\vect \in \Ev(\tau), \vect|_\sigma = \vecs}  e_\tau(\vect) \Mdot\]
Such an empirical model is said to be \emph{non-contextual} if there is a (global) probability distribution $d$ on the set $\Ev(X) = \prod_{x \in X}O_x$ 
that marginalises to the empirical probabilities,
\ie such that $d|_\sigma = e_\sigma$ for all contexts $\sigma \in \Clique(\adj)$.
\end{definition}

The marginalisation condition in the definition of empirical models ($e_\sigma = e_\tau|_\sigma$ for contexts $\sigma \subseteq \tau$) ensures that the probabilistic outcome of a compatible subset of measurements is independent of which other compatible measurements are performed alongside these. This is sometimes referred to as the \emph{no-disturbance condition} \cite{ramanathan2012generalized}, or \emph{no-signalling  condition} \cite{popescu1994quantum} in the special case of Bell scenarios.
This is a local compatibility condition, whereas non-contextuality can be seen as global compatibility: this justifies the slogan that contextuality arises from empirical data which is \emph{locally consistent but globally inconsistent} \cite{abrasmky2015ccp,abramsky2017contextuality}.

No-disturbance is satisfied by any empirical probabilities that can be realised in quantum mechanics \cite{AbramskyBrandenburger}.
However, this condition is much weaker than quantum realisability. Empirical models allow for behaviours that may be considered super-quantum, exemplified by the Popescu--Rohrlich (PR) box \cite{popescu1994quantum}.
A lot of effort has gone into trying to characterise the set of quantum behaviours by imposing some additional, physically motivated conditions on empirical models, leading to various approximations from above to this quantum set.

\subsection{Exclusivity principle on empirical models}

One candidate for a property that is distinctive for the quantum case has appeared in various formulations as Local Orthogonality \cite{fritz2013local}, Consistent Exclusivity \cite{henson2012quantum}, or Specker's Exclusivity Principle \cite{cabello2012specker}. We shall refer to it as the Probabilistic Exclusivity Principle (PEP), since it is expressed as a constraint on probability assignments.

Informally, it says that if we have a family of pairwise exclusive events, then their probabilities must sum to at most $1$. Of course, if all the events live on a single sample space, this would just be a basic property of probability measures. What gives the condition its force is that, in general, these events live on different, \emph{incompatible} contexts. Thus, it reaches beyond the usual view of contexts as different classical ``windows'' on a quantum system, in which incompatible contexts are regarded as incommensurable.

We can give a precise formulation of PEP in terms of empirical models as follows. First, we say that events $s \in \Ev(\sigma)$ and $t \in \Ev(\tau)$ are \emph{exclusive} if for some $x \in \sigma \cap \tau$, $s(x) \neq t(x)$.
The principle holds for an empirical model $\family{e_\sigma}_{\sigma \in \Clique(\adj)}$ if for any family $\{ s_i \in \Ev(\sigma_i) \}_{i \in I}$ of pairwise-exclusive events, then 
\[ \sum_{i \in I} e_{\sigma_i}(s_i) \, \leq \, 1. \]
This principle is valid in quantum-realisable empirical models, in which measurements correspond to observables, because incompatible (non-commuting) observables can share projectors, and exclusivity of outcomes with respect to common projectors implies \emph{orthogonality}.

Although we know that PEP does not fully characterise the quantum-realisable empirical models, it stands as an important and fruitful principle \cite{henson2012quantum, amaral2014exclusivity}.
We wish to study this principle from the perspective of partial Boolean algebras.

\subsection{From graphical measurement scenarios to partial Boolean algebras}
\label{ssec:measurementscenarios2pBAs}

To any graphical measurement scenario,
we can associate a partial Boolean algebra whose states correspond to empirical models.

\begin{definition}\label{def:AX}
Let $\XX = \XGO$ be a graphical measurement scenario. The partial Boolean algebra $\AX$ is defined as follows:
\begin{itemize}
\item For each measurement $x \in X$, take $B_x$ to be the finite Boolean algebra with atoms corresponding to the elements of $O_x$. We write $[x=o]$ for the atom of $B_x$ corresponding to the outcome $o \in O_x$.
\item Consider the partial Boolean algebra $A \defeq \bigoplus_{x \in X}B_x$, the coproduct of all the Boolean algebras $B_x$ taken in the category $\pba$. Note that all its elements are of the form $\inl_x(a)$ for a unique $x\in X$ and $a \in B_x$, except for the constants $0$ and $1$.
\item Define the following relation $\ocirc$ on the elements of $A$: 
\[\inl_x(a) \ocirc \inl_y(b) \Miff x\adj y \;\text{ or }\; a \in \enset{0,1} \;\text{ or }\; b \in \enset{0,1} \Mdot\]
\item Take $\AX \defeq A[\ocirc]$, the extension of $A$ by the relation $\ocirc$, as  given by \cref{commexthm}.
\end{itemize}
\end{definition}

We can give an alternative description using colimits.
\begin{definition}\label{def:cliquewise}
Let $\XX = \XGO$ be a graphical measurement scenario. The partial Boolean algebra $\BX$ is defined as follows:
\begin{itemize}
\item For each measurement $x \in X$, let $B_x$ be as in \cref{def:AX}.
\item For each context $\sigma \in \Clique(\adj)$,
let $B_\sigma \defeq \sum_{x \in \sigma} B_x$, the coproduct of all the $B_x$ with $x \in \sigma$, taken in the category $\ba$ of Boolean algebras.\footnote{Note that the set of atoms of such a coproduct Boolean algebra is the cartesian product of the sets of atoms of each of the summands. Hence an atom of $B_\sigma$ corresponds to an assignment of an outcome in $O_x$ to each measurement $x \in\sigma$.}
\item Given contexts $\sigma, \tau \in \Clique(\adj)$ with $\sigma \subseteq \tau$, there is a Boolean algebra homomorphism $\fdec{\iota^\tau_\sigma}{B_\sigma}{B_\tau}$ given by the obvious injection.
\item Take $\BX$ to be the colimit in the category $\pba$ of the diagram consisting of the Boolean algebras $\family{B_\sigma}_{\sigma \in \Clique(\adj)}$ and the inclusions $\family{\iota^\tau_\sigma}_{\sigma\subseteq\tau \in \Clique(\adj)}$.
\end{itemize}
\end{definition}

Note that the colimit in this instance can be given explicitly in a closed form,
as it is that of a diagram of Boolean algebras and inclusions satisfying the conditions of
Kalmbach's ``bundle lemma'' \cite[1.4.22]{kalmbach1983orthomodular}.
The carrier set of $\BX$ is the union of all the $B_\sigma$ modulo the identifications along inclusions $\iota^\tau_\sigma$. The Boolean subalgebras of $\BX$ are exactly those
in the family $\family{B_\sigma}_{\sigma \in \Clique(\adj)}$.

\begin{proposition}
The two descriptions coincide: for any  $\XX$, $\AX \cong \BX$.
\end{proposition}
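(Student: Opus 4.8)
The plan is to prove the isomorphism by showing that $\AX$ and $\BX$ satisfy the same universal property, and invoking Yoneda. Concretely, I would exhibit, for an arbitrary partial Boolean algebra $C$, a bijection between $\pba$-morphisms $\AX \to C$ and $\pba$-morphisms $\BX \to C$, natural in $C$; since both objects represent the same functor $\pba \to \mathbf{Set}$, they are canonically isomorphic.

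First I would unfold the universal property of $\AX = A[\ocirc]$ with $A = \bigoplus_{x \in X} B_x$. By \cref{commexthm}, morphisms $\AX \to C$ correspond to morphisms $\fdec{h}{A}{C}$ satisfying $a \ocirc b \IMP h(a) \odot_C h(b)$; and by the universal property of the coproduct in $\pba$, morphisms $A \to C$ correspond to families $(\fdec{h_x}{B_x}{C})_{x \in X}$. Unwinding the definition of $\ocirc$ (and using that $0_C, 1_C$ are commeasurable with everything, so the clauses $a \in \{0,1\}$ and $b \in \{0,1\}$ are automatic), the side condition becomes: for all $x \adj y$ and all $a \in B_x$, $b \in B_y$, one has $h_x(a) \odot_C h_y(b)$. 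Thus morphisms $\AX \to C$ are in bijection with families $(h_x)_{x \in X}$ of Boolean homomorphisms whose images are pairwise commeasurable along edges of $\adj$. On the other side, since $\BX$ is by definition the colimit of the diagram $(B_\sigma)_{\sigma}$ with the inclusions $\iota^\tau_\sigma$, morphisms $\BX \to C$ correspond to cocones, i.e.\ families $(\fdec{k_\sigma}{B_\sigma}{C})_{\sigma \in \Clique(\adj)}$ with $k_\sigma = k_\tau \circ \iota^\tau_\sigma$ whenever $\sigma \subseteq \tau$.

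It then remains to set up a natural bijection between such cocones and the families $(h_x)$ above. The easy direction sends a cocone to $h_x \defeq k_{\{x\}}$, which is well defined as $\{x\}$ is a context by reflexivity of $\adj$: if $x \adj y$ then $\{x,y\}$ is a context, and since $k_{\{x\}} = k_{\{x,y\}} \circ \iota^{\{x,y\}}_{\{x\}}$ and likewise for $y$, both $h_x(a)$ and $h_y(b)$ lie in the image of the single morphism $k_{\{x,y\}}$; as this is a morphism out of the (totally commeasurable) Boolean algebra $B_{\{x,y\}}$, its image is pairwise commeasurable in $C$, forcing $h_x(a) \odot_C h_y(b)$.

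The crux, and the step I expect to be the main obstacle, is the reverse direction: building a cocone from a family $(h_x)$. Given a context $\sigma$, its measurements are pairwise compatible, so by the side condition the set $\bigcup_{x \in \sigma} h_x(B_x)$ is pairwise commeasurable in $C$; by the defining axiom of a partial Boolean algebra it extends to a Boolean subalgebra $D \subseteq C$. The homomorphisms $\fdec{h_x}{B_x}{D}$ then induce, by the universal property of the coproduct $B_\sigma = \sum_{x \in \sigma} B_x$ taken in $\ba$, a unique Boolean homomorphism $B_\sigma \to D$, which composed with $D \hookrightarrow C$ yields $\fdec{k_\sigma}{B_\sigma}{C}$. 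I would then check that $k_\sigma$ is independent of the choice of $D$ (it is determined on the generating images $h_x(B_x)$, and these generate $B_\sigma$) and is a $\pba$-morphism (being a Boolean homomorphism into a Boolean subalgebra of $C$). Compatibility with inclusions, $k_\sigma = k_\tau \circ \iota^\tau_\sigma$, holds because both sides agree on each generator $a \in B_x$ with $x \in \sigma$, sending it to $h_x(a)$. Finally I would verify that the two assignments are mutually inverse and natural in $C$, which is routine once the constructions are in place. The delicate point throughout is the interplay between the \emph{total} $\ba$-coproduct $B_\sigma$ and the ambient partial algebra $C$, mediated precisely by the ``pairwise-commeasurable sets extend to Boolean subalgebras'' axiom; this is exactly what lets the total universal property of $B_\sigma$ be applied inside $C$.
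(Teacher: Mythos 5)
Your proposal is correct, but it takes a genuinely different route from the paper --- which in fact offers no proof at all: the proposition is stated bare, and the preceding remarks (the closed-form description of the colimit $\BX$ via Kalmbach's bundle lemma, under which the Boolean subalgebras of $\BX$ are exactly the $B_\sigma$) indicate that the intended argument is a direct structural identification of $\AX$ with that explicit description. You instead show both objects corepresent the same functor: morphisms out of $\AX$ are unwound via \cref{commexthm} and the coproduct property into families $\family{h_x}_{x \in X}$ of Boolean homomorphisms whose images are commeasurable along edges of $\adj$, morphisms out of $\BX$ are cocones over the diagram of the $B_\sigma$, these two kinds of data are matched, and Yoneda finishes. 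This buys you independence from the bundle lemma and from any analysis of the term calculus underlying $A[\ocirc]$; what it does not directly yield is the explicit description of the Boolean subalgebras of $\AX$, which is the form in which the paper actually uses this proposition later (in the correspondence between states and empirical models), so on your route that description must still be transported across the isomorphism from $\BX$. One step deserves more care than your parentheticals give it: several times you conclude that two $\pba$-morphisms from $B_\sigma$ to $C$ are equal because they agree on the generating images $h_x(B_x)$. Since $C$ is not a Boolean algebra, this inference requires both morphisms to corestrict to a \emph{common} Boolean subalgebra of $C$. This is available: the intersection of Boolean subalgebras of a partial Boolean algebra is again a Boolean subalgebra (all their operations being restrictions of the same partial operations of $C$), so the pairwise-commeasurable set $\bigcup_{x \in \sigma} h_x(B_x)$ generates a smallest Boolean subalgebra, through which all the morphisms in question factor; uniqueness in the $\ba$-coproduct property of $B_\sigma$ then applies inside it. With that patch, and noting that nothing in your argument requires contexts to be finite (the partial Boolean algebra axiom applies to arbitrary pairwise-commeasurable sets), the proof goes through.
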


\begin{proposition}\label{prop:partial:valuation-empiricalmodel}
For any graphical measurement scenario $\XX$, there is a one-to-one correspondence between
states on $\AX$
and empirical models on $\XX$.
\end{proposition}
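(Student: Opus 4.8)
The plan is to compose the two general correspondences for states established above---\cref{lemma:familyvaluation} and \cref{lemma:total:valuation-distribution}---with the explicit description of the Boolean subalgebra structure of $\BX$, transporting everything to $\AX$ along the isomorphism $\AX \cong \BX$ of the preceding proposition. The guiding idea is that a state decomposes into one finitely additive measure per Boolean subalgebra, subject to a compatibility constraint along inclusions; over $\BX$ the Boolean subalgebras are precisely the $B_\sigma$, whose atoms are the outcome assignments $\Ev(\sigma)$, so these measures are exactly the distributions $e_\sigma$ constituting an empirical model, and the compatibility constraint becomes the no-signalling marginalisation condition.

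First I would use $\AX \cong \BX$ to work entirely with $\BX$, where by \cref{def:cliquewise} the Boolean subalgebras are exactly the family $\family{B_\sigma}_{\sigma \in \Clique(\adj)}$, with the inclusions $\iota^\tau_\sigma$ for $\sigma \subseteq \tau$. By \cref{lemma:familyvaluation}, a state on $\BX$ is the same datum as a family $\family{\nu_{B_\sigma}}_{\sigma}$ of finitely additive probability measures with $\nu_{B_\sigma} = \nu_{B_\tau} \circ \iota^\tau_\sigma$ whenever $\sigma \subseteq \tau$. Then I would apply \cref{lemma:total:valuation-distribution} to each finite Boolean algebra $B_\sigma$, turning $\nu_{B_\sigma}$ into a probability distribution $e_\sigma$ on its atoms; by the footnote to \cref{def:cliquewise} these atoms are precisely the joint assignments in $\Ev(\sigma) = \prod_{x \in \sigma} O_x$, so each $e_\sigma$ is a distribution on $\Ev(\sigma)$, as an empirical model requires.

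The one step that requires genuine care---and hence the main obstacle---is to check that the compatibility condition $\nu_{B_\sigma} = \nu_{B_\tau} \circ \iota^\tau_\sigma$ corresponds exactly to the marginalisation condition $e_\sigma = e_\tau|_\sigma$. This hinges on computing the action of the inclusion $\fdec{\iota^\tau_\sigma}{B_\sigma}{B_\tau}$ on atoms: for $\sigma \subseteq \tau$ and an atom $s \in \Ev(\sigma)$, its image $\iota^\tau_\sigma(s)$ is the join in $B_\tau$ of all atoms $t \in \Ev(\tau)$ that extend $s$, \ie with $t|_\sigma = s$. Granting this, finite additivity of $\nu_{B_\tau}$ gives
\[ e_\sigma(s) = \nu_{B_\sigma}(s) = \nu_{B_\tau}(\iota^\tau_\sigma(s)) = \sum_{t \in \Ev(\tau),\, t|_\sigma = s} \nu_{B_\tau}(t) = \sum_{t \in \Ev(\tau),\, t|_\sigma = s} e_\tau(t) = e_\tau|_\sigma(s), \]
which is exactly the defining marginalisation equation of \cref{def:empiricalmodel}. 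Since each of \cref{lemma:familyvaluation}, \cref{lemma:total:valuation-distribution} and this atom computation is reversible, reading the chain of bijections in both directions yields a one-to-one correspondence between states on $\AX$ and empirical models on $\XX$, completing the argument.
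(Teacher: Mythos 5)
Your proposal is correct and follows exactly the paper's own route: identify the Boolean subalgebras of $\AX \cong \BX$ as the family $\family{B_\sigma}_{\sigma \in \Clique(\adj)}$, apply \cref{lemma:familyvaluation} and \cref{lemma:total:valuation-distribution}, and check that compatibility along inclusions becomes marginalisation. In fact you spell out the one step the paper leaves implicit---that $\iota^\tau_\sigma$ sends an atom $s \in \Ev(\sigma)$ to the join of the atoms $t \in \Ev(\tau)$ with $t|_\sigma = s$, so finite additivity yields $e_\sigma = e_\tau|_\sigma$---which is precisely the right verification.
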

\begin{proof}
This follows from the fact that the Boolean subalgebras of $\AX$ are the family $\family{B_\sigma}_{\sigma \in \Clique(\adj)}$, by applying \cref{lemma:familyvaluation,lemma:total:valuation-distribution}, 
and noting that
the condition  $\nu_S = \nu_T \circ \iota_{S,T}$ for $S \subseteq T$ on states on Boolean subalgebras
translates under the correspondence in \cref{lemma:total:valuation-distribution} to marginalisation of probability distributions.
\end{proof}

\section{Partial Boolean algebras and contextuality}

We consider some aspects of contextuality
formulated in the framework of partial Boolean algebras,
and relate them to the free construction from \cref{commexthm}.

\subsection{The Kochen--Specker property}
\label{ssec:ksproperty}

The Kochen--Specker theorem, as originally stated \cite{kochen1975problem}, is that there are partial Boolean algebras of Hilbert space projectors with no $\pba$-morphisms to $\Two$, the two-element Boolean algebra.
Since every (non-trivial\footnote{See the following discussion.}) Boolean algebra has a homomorphism to $\Two$, this implies that such a partial Boolean algebra $A$ has no morphism to \emph{any} (non-trivial) Boolean algebra.

Now, $\ba$ is a full subcategory of $\pba$. We know from \cite{van2012noncommutativity} that $A$ is the colimit in $\pba$ of the diagram $\CA$ consisting of its Boolean subalgebras and inclusions between them. Let $B$ be the colimit in $\ba$ of the same diagram $\CA$. Then, the cone from $\CA$ to $B$ is also a cone in $\pba$, hence there is a mediating $\pba$-morphism from $A$ to $B$.

To resolve the apparent contradiction, note that $\ba$ is an equational variety of algebras over $\mathbf{Set}$. 
As such, it is complete and cocomplete, but it also admits the one-element Boolean algebra $\One$, in which $0=1$. Note that the trivial Boolean algebra $\One$ does \emph{not} have a homomorphism to $\Two$.

We can conclude from the discussion above that a partial Boolean algebra satisfies the Kochen--Specker property of not having a morphism to $\Two$
if and only if the colimit in $\ba$ of its diagram of Boolean subalgebras is $\One$. 
In fact, we could formulate this property directly for diagrams of Boolean algebras, without referring to partial Boolean algebras at all: a diagram in $\ba$ is K--S if its colimit in $\ba$ is $\One$. We could say that such a diagram is ``implicitly contradictory'' since in trying to combine all the information in a colimit we obtain the manifestly contradictory $\One$.

Finally, this property admits a neat formulation in terms of the free extension of partial Boolean algebras by a relation,
reminiscent of the definition of a perfect group.

\begin{theorem}
\label{KScolimthm}
Let $A$ be a partial Boolean algebra. The following are equivalent:
\begin{enumerate}
    \item $A$ has the K--S property, \ie it has no morphism to $\Two$.
    \item The diagram $\CA$ of Boolean subalgebras of $A$ is K--S, \ie its colimit in $\ba$ is $\One$.
    \item $A[A^2] = \One$.
\end{enumerate}
\end{theorem}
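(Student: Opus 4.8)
The plan is to prove the cycle of implications $(1) \Rightarrow (2) \Rightarrow (3) \Rightarrow (1)$, leaning heavily on the free-extension machinery of \cref{commexthm} and the colimit characterisation discussed just above the statement.

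For $(1) \Leftrightarrow (2)$, the bulk of the work is already done in the preceding discussion. The key facts are that $A$ is the colimit in $\pba$ of its diagram $\CA$ of Boolean subalgebras, that $\ba$ is a full subcategory of $\pba$, and that a non-trivial Boolean algebra always has a homomorphism to $\Two$ whereas the trivial algebra $\One$ does not. I would argue as follows. Let $B$ denote the colimit of $\CA$ taken in $\ba$. If $B = \One$, then any $\pba$-morphism $\fdec{h}{A}{\Two}$ would, by composing the inclusions of the Boolean subalgebras into $A$ with $h$, yield a cone from $\CA$ to $\Two$ in $\ba$, inducing a $\ba$-homomorphism $\fdec{\One}{}{\Two}$, which is impossible; so $(2) \Rightarrow (1)$. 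Conversely, if $B \neq \One$, then $B$ is non-trivial and hence admits a homomorphism $\fdec{B}{}{\Two}$; precomposing with the mediating $\pba$-morphism $\fdec{A}{}{B}$ (which exists because the colimit cone in $\ba$ is also a cone in $\pba$) gives a $\pba$-morphism $\fdec{A}{}{\Two}$, so $(1) \Rightarrow (2)$.

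The genuinely new content is the equivalence with $(3)$, and here the universal property of $A[A^2]$ is exactly the right tool. Taking $\ocirc \defeq A^2$, the full relation, forces \emph{all} pairs of generators to be commeasurable in $A[A^2]$. By \cref{commexthm}, $\pba$-morphisms $\fdec{A[A^2]}{}{B}$ correspond bijectively to $\pba$-morphisms $\fdec{h}{A}{B}$ satisfying $a \mathbin{A^2} b \IMP h(a) \odot_B h(b)$ --- but since $A^2$ relates every pair, this side condition says precisely that the entire image of $h$ is pairwise commeasurable, \ie lands inside a single Boolean subalgebra of $B$. The strategy is thus to observe that $A[A^2]$ is itself a Boolean algebra (every pair of generators, and hence by the closure axiom every pair of elements, is commeasurable, so the carrier is a total Boolean algebra), and then to identify it with the colimit $B$ of $\CA$ in $\ba$. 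Concretely, I would show $A[A^2]$ satisfies the universal property of that colimit: a cocone on $\CA$ in $\ba$ is the same data as a $\pba$-morphism out of $A$ whose image is pairwise commeasurable, which by the universal property of $A[A^2]$ factors uniquely through $\eta$. Given this identification, $(3)$ says $B = \One$, which is exactly $(2)$.

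The main obstacle I anticipate is making the identification $A[A^2] \cong B$ fully rigorous rather than merely plausible. Two points need care. First, I must check that $A[A^2]$ is genuinely a (total) Boolean algebra: commeasurability of all generators must be shown to propagate, via the rules in \cref{tab:rules}, to commeasurability of all \emph{terms}, so that the defining closure property of a partial Boolean algebra delivers a single Boolean algebra on the whole carrier --- the rules on the fourth line (closure of $\odot$ under $\wedge$, $\vee$, $\neg$) are what make this go through. Second, I must verify that the universal properties line up on the nose, in particular that every $\ba$-cocone on $\CA$ arises from a morphism out of $A$ and conversely, using the fact from \cite{van2012noncommutativity} that $A$ is the $\pba$-colimit of $\CA$. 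Once both universal properties are seen to classify the same maps, uniqueness of objects representing a universal property gives the isomorphism, and the three-way equivalence closes. A cleaner alternative, which I would present if the direct colimit identification proves fiddly, is to bypass $B$ entirely and argue $(2) \Leftrightarrow (3)$ by noting that morphisms $\fdec{A[A^2]}{}{\Two}$ correspond to morphisms $\fdec{A}{}{\Two}$ (the commeasurability side condition being vacuous since $\Two$ is already totally commeasurable), so $A[A^2]$ has no morphism to $\Two$ iff $A$ does; combined with the observation that the Boolean algebra $A[A^2]$ has no morphism to $\Two$ exactly when it equals $\One$, this yields $(3) \Leftrightarrow (1)$ directly.
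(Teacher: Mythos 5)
Your proposal is correct, and in fact it contains two complete arguments. The ``cleaner alternative'' you sketch at the end is exactly the paper's proof: the paper handles $(1)\Leftrightarrow(2)$ by the discussion preceding the theorem, observes that all elements of $A[A^2]$ are commeasurable so that it is a (total) Boolean algebra, and then notes that morphisms $A \to \Two$ and morphisms $A[A^2] \to \Two$ are in bijection (the side condition of \cref{commexthm} being vacuous because $\Two$ is total), so that $A$ has no morphism to $\Two$ iff the Boolean algebra $A[A^2]$ is trivial.

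Your primary route is genuinely different: rather than closing the equivalence through item (1), you identify $A[A^2]$ outright with the colimit $B$ of $\CA$ in $\ba$, a stronger structural fact that the paper never states. The identification is sound: for any Boolean algebra $C$, the side condition of \cref{commexthm} is again vacuous, so $\pba$-morphisms $A \to C$ correspond to $\ba$-morphisms $A[A^2] \to C$ (using fullness of $\ba$ in $\pba$), while Heunen--van den Berg's theorem that $A$ is the $\pba$-colimit of $\CA$ makes $\pba$-morphisms $A \to C$ correspond to $\ba$-cocones on $\CA$ with vertex $C$; both correspondences are given by precomposition with canonical maps, so they are natural in $C$, whence $A[A^2] \cong B$. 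What this buys: $(2)\Leftrightarrow(3)$ becomes a purely structural statement about colimits, making no reference to $\Two$ and hence not consuming the ultrafilter lemma (which both proofs still need for the ``non-trivial implies a morphism to $\Two$'' direction of $(1)\Leftrightarrow(2)$); it also exhibits $A[A^2]$ as the Booleanisation of $A$, which is of independent interest. The cost is the extra bookkeeping you correctly anticipate, though note that the commeasurability-propagation step (all elements of $A[A^2]$ are pairwise commeasurable, via the fourth-line rules of \cref{tab:rules}) is needed by \emph{both} routes, since the paper's argument also rests on $A[A^2]$ being a Boolean algebra; the only genuinely additional work in your route is the naturality check for the colimit identification.
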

\begin{proof}
The equivalence between the first two statements follows from the discussion above.
Now, all elements are commeasurable in $A[A^2]$, so it is a Boolean algebra. 
There is a morphism $A \longrightarrow \Two$
if and only if
there is a morphism $A[A^2] \longrightarrow \Two$,
by the universal property of $A[A^2]$ (in the $\Rightarrow$ direction) or composition with $\fdec{\eta}{A}{A[A^2]}$ (in the $\Leftarrow$ direction). Since $A[A^2]$ is a Boolean algebra, this is in turn equivalent to $A[A^2]$ being non-trivial. In other words, there is no morphism 
$A \longrightarrow \Two$
if and only if $A[A^2] = \One$.
\end{proof}

\subsection{Probabilistic contextuality}
\label{ssec:probcontextuality}

The notion of contextuality for states also
admits a formulation in this setting.

\begin{definition}\label{def:statecontextual}
A state $\fdec{\nu}{A}{[0,1]}$ on a partial Boolean algebra
$A$ is said to be \emph{non-contextual} if
it extends to $A[A^2]$, \ie if there is a state
$\fdec{\hat{\nu}}{A[A^2]}{[0,1]}$
such that $\nu = \hat{\nu}\circ\eta$.
\end{definition}

By the universal property of $A[A^2]$, this is equivalent to requiring that there be some Boolean algebra $B$,
a morphism $\fdec{h}{A}{B}$, and state $\fdec{\hat{\nu}}{B}{[0,1]}$ such that $\nu = \hat{\nu}\circ\eta$.

\begin{proposition}
Let $\XX$ be a graphical measurement scenario.
A state on $\AX$ is contextual  in the sense of \cref{def:statecontextual}
if and only if 
the corresponding empirical model under the correspondence of \cref{prop:partial:valuation-empiricalmodel} is contextual  in the sense of \cref{def:empiricalmodel}.
\end{proposition}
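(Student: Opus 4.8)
The plan is to unwind both notions of non-contextuality through the correspondences already established, reducing the claim to a single concrete identification of $\AX[\AX^2]$. Fix a state $\nu$ on $\AX$ and let $\family{e_\sigma}_{\sigma\in\Clique(\adj)}$ be the corresponding empirical model, so that under \cref{prop:partial:valuation-empiricalmodel} --- via \cref{lemma:familyvaluation,lemma:total:valuation-distribution} --- each $e_\sigma$ is exactly the restriction $\nu|_{B_\sigma}$ read off as a distribution on the atoms $\Ev(\sigma)$ of $B_\sigma$. Recall that $\nu$ is non-contextual in the sense of \cref{def:statecontextual} precisely when it extends to a state $\hat\nu$ on $\AX[\AX^2]$ along $\eta$.

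First I would identify $\AX[\AX^2]$ concretely. As all pairs are commeasurable there, it is a total Boolean algebra: the one presented by the generators $[x=o]$ subject to each family $\setdef{[x=o]}{o\in O_x}$ partitioning unity and all generators commuting. Equivalently, it is the coproduct $\sum_{x\in X}B_x$ taken in $\ba$, whose Stone space is $\prod_{x\in X}O_x = \Ev(X)$. In the case of finitely many measurements this Boolean algebra is finite with atoms exactly the global assignments $\Ev(X)$, so \cref{lemma:total:valuation-distribution} gives a bijection between states on $\AX[\AX^2]$ and probability distributions $d$ on $\Ev(X)$ --- precisely the data of a non-contextual empirical model in \cref{def:empiricalmodel}.

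Next I would match the extension condition with marginalisation. For a context $\sigma$, the inclusion $\iota \colon B_\sigma \hookrightarrow \AX[\AX^2]$ sends an atom $s\in\Ev(\sigma)$ to $\bigjoin\setdef{t\in\Ev(X)}{t|_\sigma = s}$, so if $\hat\nu$ is the state corresponding to $d$ then $\hat\nu(\iota(s)) = \sum_{t|_\sigma = s}d(t) = d|_\sigma(s)$. On the other hand, the factorisation $\nu = \hat\nu\circ\eta$ forces $\hat\nu(\iota(s)) = e_\sigma(s)$. Thus an extension $\hat\nu$ of $\nu$ corresponds to a global distribution $d$ satisfying $d|_\sigma = e_\sigma$ for every context $\sigma$, which is exactly non-contextuality of the empirical model. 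Reading this equivalence in both directions yields the proposition.

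The step I expect to be the main obstacle is the clean identification of $\AX[\AX^2]$ and, in particular, the case of infinitely many measurements. There \cref{lemma:total:valuation-distribution} no longer applies, since $\AX[\AX^2]$ is not finite, and the bijection between its states and ``global distributions'' must instead be obtained through Stone duality, identifying states with finitely additive probability measures on the clopen subsets of $\prod_{x\in X}O_x$ and checking that this matches the notion of global distribution intended in \cref{def:empiricalmodel}. Once this identification is secured, the atom-wise computation relating restriction along $\iota$ to marginalisation is routine.
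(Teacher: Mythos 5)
The paper states this proposition without proof, so there is no argument of its own to compare against; yours is correct and is evidently the intended one, its key step being the identification $\AX[\AX^2] \cong \sum_{x \in X} B_x$ (coproduct in $\ba$), which holds because $\AX[\AX^2]$ is a total Boolean algebra (as in the proof of \cref{KScolimthm}) and, for any Boolean algebra $C$, the commeasurability side-conditions in the universal properties of \cref{commexthm} are vacuous, so morphisms out of $\AX[\AX^2]$, out of $\AX$, and out of $\bigoplus_{x\in X} B_x$ into $C$ all coincide with families of morphisms $B_x \to C$. Your atom-wise matching of the factorisation $\nu = \hat{\nu} \circ \eta$ with the marginalisation condition $d|_\sigma = e_\sigma$ (using that a state on each finite $B_\sigma$ is determined by its values on atoms) then carries the whole content. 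The obstacle you flag in the infinite case is genuine, but it is not a gap of your proof relative to the paper: the paper's own proof of \cref{prop:partial:valuation-empiricalmodel} already rests on \cref{lemma:total:valuation-distribution}, which applies only to finite Boolean algebras, whereas an infinite clique $\sigma$ of dichotomic measurements makes $B_\sigma$ infinite and indeed atomless; so the paper is implicitly working in the same effectively finite setting in which your argument is complete. In the genuinely infinite setting one would further have to fix whether a ``global distribution'' in \cref{def:empiricalmodel} means a discretely supported distribution on $\Ev(X)$ or a finitely additive measure on the clopen algebra of $\prod_{x \in X} O_x$; states on $\AX[\AX^2]$ correspond to the latter, strictly larger class, and the paper does not address this discrepancy either.
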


Note that if $A$ has the Kochen--Specker property, then $A[A^2] = \One$, and since there is no state on $\One$,
every state of $A$ is necessarily contextual.
An advantage of partial Boolean algebras is that the K--S property provides an intrinsic, logical approach to defining \emph{state-independent contextuality}.

\section{Exclusivity principles for partial Boolean algebras}

We now consider exclusivity principles from the partial Boolean algebra perspective.
This will subsume the previous discussion on PEP for empirical models in graphical measurement scenarios.

We introduce two exclusivity principles: one acts at the `logical' level, \ie the level of events or elements of a partial Boolean algebra, whereas the other acts at the `probabilistic' level, applying to states of a partial Boolean algebra.

\subsection{Logical exclusivity principle (LEP)}
\label{ssec:lep}

The basic ingredient is a notion of exclusivity between events (or elements) of a partial Boolean algebra.
Given a partial Boolean algebra $A$ and elements $a,b \in A$, we write
$a \leq b$ to mean that $a \odot b$ and $a \meet b = a$.
Note that the restriction of this relation $\leq$ to any Boolean subalgebra of $A$ coincides with the partial order underlying that Boolean algebra.

\begin{definition}
Let $A$ be a partial Boolean algebra.
Two elements $a, b \in A$ are said to be \emph{exclusive}, written $a \lex b$, if there is an element $c \in A$ such that
$a \leq c$
and
$b \leq \lnot c$.
\end{definition}

Note that $a \lex b$ is a weaker requirement than $a \meet b = 0$, although the two would be equivalent in a Boolean algebra. The point is that in a general partial Boolean algebra one might have exclusive events that are not commeasurable (and for which, therefore, the $\meet$~operation is not even defined).

\begin{definition}
A partial Boolean algebra is said to satisfy the
\emph{logical exclusivity principle (LEP)} if
any two elements that are  exclusive are also commeasurable, \ie if $\lex \; \subseteq \; \odot$.

We write $\epba$ for the full subcategory of $\pba$ whose objects are partial Boolean algebras satisfying LEP.
\end{definition}

\subsubsection*{Logical exclusivity and transitivity}
\label{leptranssec}

The logical exclusivity principle turns out to be equivalent to the following notion of transitivity \cite{lock1985connections,hughes1989structure}.

\begin{definition}
A partial Boolean algebra is said to be \emph{transitive} if for all elements $a, b, c$,
$a \leq b$ and $b \leq c$ implies $a \leq c$.
\end{definition}

Transitivity can fail in general for a partial Boolean algebra, since one need not have $a \odot c$ under the stated hypotheses.
Note that the relation $\leq$ on a partial Boolean algebra is always reflexive and anti-symmetric, so this condition is equivalent to $\leq$ being a partial order (globally) on $A$.
A partial Boolean algebra of the form $\Proj(\HH)$ is always transitive. 

\begin{proposition}
Let $A$ be a partial Boolean algebra. Then it satisfies LEP if and only if it is transitive.
\end{proposition}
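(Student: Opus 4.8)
The plan is to prove both implications directly, working with the definitions of exclusivity ($a \lex b$ iff there is $c$ with $a \leq c$ and $b \leq \lnot c$), commeasurability, and the order $\leq$ (where $a \leq b$ means $a \odot b$ and $a \meet b = a$). Throughout I will use the defining axiom of a partial Boolean algebra: any pairwise-commeasurable set extends to a Boolean subalgebra, inside which all the usual Boolean identities hold.

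For the direction LEP implies transitive, suppose $A$ satisfies LEP and assume $a \leq b$ and $b \leq c$; I must derive $a \leq c$, and the crux is establishing $a \odot c$ so that $a \meet c$ is even defined. The natural move is to relate transitivity to exclusivity via complementation. First I would observe that $a \leq b$ is equivalent to $a \lex \lnot b$: indeed, taking the witness $c \defeq b$ in the definition of exclusivity gives $a \leq b$ and $\lnot b \leq \lnot b$, and conversely $a \lex \lnot b$ unpacks (via some witness) to yield $a \leq b$ after using Boolean reasoning inside a suitable subalgebra. From $b \leq c$ we get $b \odot c$, hence $\lnot c \leq \lnot b$, and combined with $a \leq b$ this should yield $a \lex \lnot c$ (the witness being $b$: we have $a \leq b$ and $\lnot c \leq \lnot b$). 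LEP then forces $a \odot \lnot c$, equivalently $a \odot c$. Once $a \odot c$ is secured, together with $a \odot b$ and $b \odot c$ the three elements $a, b, c$ are pairwise commeasurable, so they sit inside a common Boolean subalgebra where $a \leq b \leq c \Rightarrow a \leq c$ holds by ordinary transitivity.

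For the converse, transitive implies LEP, suppose $A$ is transitive and let $a \lex b$, so there is $c$ with $a \leq c$ and $b \leq \lnot c$; I must show $a \odot b$. From $b \leq \lnot c$ I would like to pass to a comparison that lets transitivity bite. Note $a \leq c$ gives $a \odot c$, and $b \leq \lnot c$ gives $b \odot \lnot c$, hence $b \odot c$. The goal is to chain these so that transitivity yields commeasurability of $a$ and $b$ directly; the cleanest route is to show that $a \leq \lnot b$ using transitivity through $c$ (from $a \leq c$ and the fact that $c \leq \lnot b$ follows from $b \leq \lnot c$ by taking complements within the subalgebra generated by $b$ and $c$). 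Applying transitivity to $a \leq c$ and $c \leq \lnot b$ gives $a \leq \lnot b$, which in particular includes $a \odot \lnot b$, hence $a \odot b$, as required.

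The main obstacle is the bookkeeping around where each partial operation is defined: every inference of the form ``$x \leq y$ and $y \leq z$, therefore do Boolean algebra'' is only legitimate once commeasurability of the relevant pair is known, and the whole point of LEP/transitivity is precisely to supply the missing commeasurabilities. I expect the delicate step to be the passage between $a \leq b$ and $a \lex \lnot b$ and the complementation steps like $b \leq \lnot c \Rightarrow c \leq \lnot b$: these must be justified inside an explicitly identified Boolean subalgebra (the one extending $\{b, c\}$, which exists since $b \odot c$), after which they are routine. Once the correspondence between $\leq$ and $\lex$ via complements is set up carefully, both implications reduce to a single application of the hypothesis to license one new commeasurability, followed by standard reasoning within a common Boolean subalgebra.
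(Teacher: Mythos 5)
Your proof is correct and follows essentially the same route as the paper's: in the forward direction you exhibit $a \lex \lnot c$ via the witness $b$ (using the contraposition $\lnot c \leq \lnot b$) so that LEP yields $a \odot c$, after which Boolean reasoning in a common subalgebra gives $a \leq c$; in the converse direction you contrapose $b \leq \lnot c$ to $c \leq \lnot b$ and apply transitivity to get $a \leq \lnot b$, hence $a \odot b$. Your explicit attention to which Boolean subalgebra licenses each complementation step is exactly the bookkeeping the paper's terser proof leaves implicit, so nothing further is needed.
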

\begin{proof}
Suppose that $A$ satisfies LEP, $a \leq b$, and $b \leq c$. Then $\neg c \leq \neg b$. Hence, by LEP, $a \odot \neg c$, and so $a \odot \neg \neg c = c$. Now, $a \meet c = (a \meet b) \meet c = a \meet (b \meet c) = a \meet b = a$, showing that $a \leq c$.

Conversely, suppose that $A$ is transitive, $a \leq c$, and $b \leq \neg c$. Then, $c = \neg \neg c \leq \neg b$, hence $a \leq \neg b$ by transitivity. In particular, $a \odot \neg b$, and so $a \odot \neg \neg b = b$.
\end{proof}

As an immediate consequence, any $\Proj(\HH)$ satisfies LEP.

It is shown in \cite{gudder1972partial} that a partial Boolean algebra is transitive if and only if it is an orthomodular poset.

\subsection{Probabilistic exclusivity principle (PEP)}

We now consider an analogous principle applying at the probabilistic level, \ie at the level on states of a partial Boolean algebra.

\begin{definition}
Let $A$ be a partial Boolean algebra.
A state $\fdec{\nu}{A}{[0,1]}$ on $A$ is said to satisfy the \emph{probabilistic exclusivity principle (PEP)} if for any set $S\subseteq A$ of pairwise-exclusive elements,
\ie such that $a \lex b$ for any distinct $a, b \in S$,
we have $\sum_{a \in S} \nu(a) \, \leq \, 1$.

A partial Boolean algebra is said to satisfy PEP if all of its states satisfy PEP.
\end{definition}

Note that the condition $\sum_{a \in S} \nu(a) \leq 1$ is true of any set $S$ of elements in a Boolean algebra satisfying $a \meet b = 0$ for distinct $a,b \in S$.

Note that this subsumes the discussion of the PEP at the level of empirical models.
If $\XX$ is a measurement scenario,
the correspondence in \cref{prop:partial:valuation-empiricalmodel} between empirical models on $\XX$ and states of $\AX$ restricts to a bijection between empirical models and states satisfying the probabilistic exclusivity principle.

\subsection{LEP \textit{vs} PEP}
\label{ssec:lepvpep}

The following result follows immediately from the definitions of partial Boolean algebras and states.

\begin{proposition}
\label{prop:lepimpliespep}
Let $A$ be a partial Boolean algebra satisfying the logical exclusivity principle. Then, any state on $A$ satisfies the probabilistic exclusivity principle.
\end{proposition}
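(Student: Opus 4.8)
The plan is to reduce the statement to the elementary fact, already noted just after the definition of PEP, that in a Boolean algebra equipped with a finitely additive probability measure the measures of a family of pairwise-disjoint elements sum to at most $1$. The bridge from this logical-exclusivity hypothesis to the probabilistic conclusion is supplied by LEP together with the defining closure property of partial Boolean algebras (every set of pairwise-commeasurable elements sits inside a Boolean subalgebra) and the characterisation of states as maps whose restriction to each Boolean subalgebra is a finitely additive probability measure.

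First I would fix a state $\fdec{\nu}{A}{[0,1]}$ and an arbitrary set $S \subseteq A$ with $a \lex b$ for all distinct $a, b \in S$. Since $A$ satisfies LEP we have $\lex \; \subseteq \; \odot$, so $S$ is a set of pairwise-commeasurable elements and is therefore contained in some Boolean subalgebra $B$ of $A$. By the characterisation of states, $\nu|_B$ is a finitely additive probability measure on $B$. It then remains only to see that the elements of $S$ are pairwise-disjoint \emph{inside} $B$, \ie that $a \meet b = 0$ for distinct $a, b \in S$, after which the noted elementary fact gives $\sum_{a \in S}\nu(a) \leq 1$, which is exactly PEP for $\nu$.

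The crux is thus upgrading exclusivity to disjointness. Unwinding $a \lex b$ yields a witness $c \in A$ with $a \leq c$ and $b \leq \neg c$; from $a \leq c$ we read off $a \odot c$, from $b \leq \neg c$ we get $b \odot \neg c$ and hence $b \odot c$ (as $\neg$ preserves commeasurability), and LEP already supplies $a \odot b$. Consequently $\enset{a,b,c}$ is pairwise-commeasurable and lies in a common Boolean subalgebra, in which $a \meet b \leq c \meet \neg c = 0$; since the partial meet of $A$ is single-valued wherever defined, this value $0$ is the same as $a \meet b$ computed in $B$. I expect the only non-routine point to be precisely this handling of the witness $c$: one must check that $a$, $b$ and $c$ are \emph{jointly} commeasurable — not merely that $a \odot b$ — so that the Boolean computation $a \meet b \leq c \meet \neg c = 0$ is legitimate. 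The definedness and symmetry rules together with LEP make this immediate, but it is the step at which incommeasurability of the witness could otherwise obstruct the argument.
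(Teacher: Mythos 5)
Your proof is correct and is essentially the argument the paper intends: the paper dismisses this proposition as following ``immediately from the definitions,'' relying on its earlier remark that exclusivity coincides with disjointness inside a Boolean algebra, and your write-up supplies exactly the details behind that remark. In particular, your handling of the witness $c$ --- checking that $\enset{a,b,c}$ is jointly commeasurable so that the computation $a \meet b \leq c \meet \neg c = 0$ takes place inside a single Boolean subalgebra, and then noting that the single-valuedness of the partial meet transports this to the subalgebra containing all of $S$ --- is precisely the step the paper leaves implicit, and you justify it correctly.
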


In a general partial Boolean algebra $A$, however, not all states need satisfy the PEP.
A well-known example is the state on the partial Boolean algebra corresponding to a $(4,2,2)$ Bell scenario\footnote{This stands for a scenario in which there are $4$ parties, each of which can choose to perform one of $2$ measurements with $2$ possible outcomes.} which corresponds to two (independent) copies of the PR box \cite{fritz2013local}.

However, using the construction from \cref{commexthm}, we can construct from $A$ a new partial Boolean algebra, namely $A[\lex]$, whose
states yield 
states of $A$ that satisfy PEP.

\begin{theorem}\label{thm:statesAlex}
Let $A$ be a partial Boolean algebra. 
Then a state $\fdec{\nu}{A}{[0,1]}$ satisfies PEP if 
there is a state $\hat{\nu}$ of $\Alex$ such that 
\[ \begin{tikzcd}
A \ar[rd, "\nu"'] \ar[r, "\eta"] 
 & \Alex \ar[d, " \hat{\nu}"] \\
 & {[0,1]}
\end{tikzcd}
\]
commutes.
\end{theorem}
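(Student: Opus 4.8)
The plan is to transport any family of pairwise-exclusive elements of $A$ into a single Boolean subalgebra of $\Alex$, where their exclusivity becomes ordinary Boolean disjointness, and then invoke finite additivity of $\hat\nu$ there. The first ingredient is that the unit $\fdec{\eta}{A}{\Alex}$ preserves the order $\leq$: since $\eta$ is a $\pba$-morphism it preserves $\odot$ and preserves $\meet$ wherever that is defined, so $a \leq b$ (\ie $a \odot b$ and $a \meet b = a$) yields $\eta(a) \odot \eta(b)$ and $\eta(a) \meet \eta(b) = \eta(a \meet b) = \eta(a)$, hence $\eta(a) \leq \eta(b)$. Combined with preservation of $\neg$, this shows $\eta$ sends exclusive pairs to exclusive pairs: if $a \lex b$ is witnessed by $c$ (so $a \leq c$ and $b \leq \neg c$), then $\eta(a) \leq \eta(c)$ and $\eta(b) \leq \neg\eta(c)$, so that $\eta(a) \lex \eta(b)$ is witnessed by $\eta(c)$.

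Next, let $S \subseteq A$ be a set of pairwise-exclusive elements and consider $\enset{\eta(a) \mid a \in S}$. By the defining property of $\Alex = A[\lex]$ from \cref{commexthm}, each exclusive pair is forced to be commeasurable, so this family is pairwise commeasurable; by the defining axiom of partial Boolean algebras it is therefore contained in a single Boolean subalgebra $T$ of $\Alex$. The key step is to show that distinct $\eta(a), \eta(b)$ are disjoint in $T$, \ie $\eta(a) \meet \eta(b) = 0$. Fixing a witness $c$ with $a \leq c$ and $b \leq \neg c$, the three elements $\eta(a), \eta(b), \eta(c)$ are pairwise commeasurable (using also that $x \odot y \iff x \odot \neg y$, which follows from the negation rules for $\odot$), hence lie in some common Boolean subalgebra, inside which $\eta(a) \meet \eta(b) \leq \eta(c) \meet \neg\eta(c) = 0$. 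Since $\meet$ is computed consistently across the Boolean subalgebras of a partial Boolean algebra, this identity holds in $T$ as well.

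Finally, I would restrict $\hat\nu$ to $T$, where it is a finitely additive probability measure. For any finite $S' \subseteq S$, the elements $\enset{\eta(a) \mid a \in S'}$ are pairwise disjoint in $T$, so $\sum_{a \in S'} \hat\nu(\eta(a)) = \hat\nu\left(\bigjoin_{a \in S'} \eta(a)\right) \leq 1$; taking the supremum over finite $S'$ gives $\sum_{a \in S} \hat\nu(\eta(a)) \leq 1$. As $\hat\nu \circ \eta = \nu$ by the commuting triangle, this is exactly $\sum_{a \in S} \nu(a) \leq 1$, so $\nu$ satisfies PEP. I expect the only genuine subtlety to be the bookkeeping connecting the two uses of the partial-algebra axiom: the disjointness $\eta(a) \meet \eta(b) = 0$ is naturally established in a subalgebra tailored to the witness $c$, and one must use the global well-definedness of $\meet$ to transfer it to the common subalgebra $T$ over which the (possibly infinite) sum is taken; once $\eta$'s preservation of $\leq$ is in hand, everything else is routine.
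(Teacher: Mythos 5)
Your proof is correct and follows essentially the same route as the paper's: push the pairwise-exclusive family through $\eta$ into $\Alex$, where the freely added commeasurability places it inside a single Boolean subalgebra, establish that distinct images are disjoint there, and invoke finite additivity of $\hat{\nu}$ restricted to that subalgebra. The only difference is one of detail: the paper simply asserts $\eta(a) \meet \eta(b) = 0$ for distinct $a, b \in S$, whereas you derive it explicitly from the exclusivity witness $c$ (via preservation of $\leq$ and $\neg$ by $\eta$) and handle the passage from finite to possibly infinite sums --- details the paper leaves implicit.
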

\begin{proof}
Let $\fdec{\nu}{A}{[0,1]}$ be a state,
and suppose it factorises through a state $\hat{\nu}$ of $\Alex$. Let $S \subseteq A$ be a set of pairwise exclusive events in $A$. Then $\setdef{\eta(a)}{a \in S}$ is a commeasurable subset of $\Alex$, hence it is contained in a Boolean subalgebra $B$ of $\Alex$. Since $\hat{\nu}$ must restrict to a finitely-additive probability measure on $B$, and since $\eta(a) \meet_{\Alex} \eta(b) = 0$ for all distinct $a, b \in S$, we have that 
\[\sum_{a \in S} \nu(a) = \sum_{a \in S}\hat{\nu}(\eta(a)) \leq 1 \Mdot\]
\end{proof}

\subsection{A reflective adjunction for logical exclusivity}
\label{ssec:reflection}

It is not clear whether the partial Boolean algebra $\Alex$ necessarily satisfies LEP. While the principle holds for all its elements in the image of $\fdec{\eta}{A}{\Alex}$,
it may fail to hold for other elements in  $\Alex$.

However, we can adapt the construction of \cref{commexthm}  to 
show that one can freely generate,
from any given partial Boolean algebra,
a new partial Boolean algebra satisfying LEP.
This LEP-isation is analogous to \eg the way
one can `abelianise' any group, or use Stone--\v{C}ech compactification to form a compact Hausdorff space from any topological space.

\begin{theorem}
\label{thm:reflection}
The category $\epba$ is a reflective subcategory of $\pba$, \ie the inclusion functor $\fdec{I}{\epba}{\pba}$ has a left adjoint $\fdec{\Exc}{\pba}{\epba}$.
Concretely, for any partial Boolean algebra $A$, there is a partial Boolean algebra $\Exc(A) = \Alexstar$
which satisfies LEP such that:
\begin{itemize}
    \item  there is a $\pba$-morphism $\fdec{\eta}{A}{\Alexstar}$;
   \item for any $\pba$-morphism $\fdec{h}{A}{B}$ where $B$ is a partial Boolean algebra $B$ satisfying LEP, there is a unique $\pba$-morphism $\fdec{\hat{h}}{\Alexstar}{B}$ such that $h = \hat{h}\circ\eta$, \ie such that the following diagram commutes:
\[ \begin{tikzcd}
A \ar[rd, "h"'] \ar[r, "\eta"] 
 & \Alexstar \ar[d, " \hat{h}"] \\
 & B
\end{tikzcd}
\]
\end{itemize}
\end{theorem}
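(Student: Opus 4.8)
The plan is to iterate the construction $\Alex$ supplied by \cref{commexthm} and pass to a colimit, building a free model for the finitary ``axiom'' $\lex \subseteq \odot$. Concretely, set $A_0 \defeq A$ and, given $A_n$ with its exclusivity relation $\lex_n$, put $A_{n+1} \defeq A_n[\lex_n]$, with structure morphism $\fdec{\eta_n}{A_n}{A_{n+1}}$ as in \cref{commexthm}. This yields a chain
\[
A_0 \xrightarrow{\eta_0} A_1 \xrightarrow{\eta_1} A_2 \xrightarrow{\eta_2} \cdots
\]
in $\pba$, and I would define $\Exc(A) = \Alexstar$ to be its colimit $\varinjlim_n A_n$, with $\fdec{\eta}{A}{\Alexstar}$ the composite of the colimit coprojection with $\eta_0$. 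Since $\epba$ is by definition a full subcategory of $\pba$, it suffices to establish the displayed universal property; reflectivity of the inclusion then follows by standard category theory.

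For the universal property I would first record that any $\pba$-morphism preserves $\leq$ (as it preserves $\odot$, $\meet$, and $\neg$) and hence preserves exclusivity: if $a \lex b$ is witnessed by $c$, then $f(a) \lex f(b)$ is witnessed by $f(c)$. Now let $B$ satisfy LEP and let $\fdec{h}{A}{B}$ be given, with $h_0 \defeq h$. Arguing by induction, suppose $h$ has been extended to $\fdec{h_n}{A_n}{B}$. For any $a \lex_n b$ in $A_n$ we get $h_n(a) \lex h_n(b)$, so $h_n(a) \odot_B h_n(b)$ because $B$ satisfies LEP; the universal property of $A_n[\lex_n] = A_{n+1}$ then yields a unique $\fdec{h_{n+1}}{A_{n+1}}{B}$ with $h_{n+1}\circ\eta_n = h_n$. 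This produces a cocone $\family{h_n}_{n}$ over the chain, and the universal property of the colimit gives the unique mediating $\fdec{\hat h}{\Alexstar}{B}$ with $\hat h \circ \eta = h$, uniqueness at each finite stage guaranteeing global uniqueness.

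The main obstacle is showing that $\Alexstar$ itself satisfies LEP, \ie that a single $\omega$-indexed iteration suffices. Here I would use the explicit, filtered description of colimits in $\pba$ underlying \cref{commexthm} and its coequaliser variant: every element of $\varinjlim_n A_n$ is represented by an element of some finite stage $A_n$, and every instance of $\odot$ or of the equations defining $\leq$ holds in the colimit precisely when it holds at some finite stage. The key point is that exclusivity is \emph{finitary}: $a \lex b$ in $\Alexstar$ is witnessed by a single element $c$, and each of $a \leq c$ and $b \leq \neg c$ unfolds into finitely many atomic assertions about $\odot$ and $\meet$. I would therefore choose $n$ large enough that representatives $a', b', c' \in A_n$ satisfy $a' \leq c'$ and $b' \leq \neg c'$ already in $A_n$, whence $a' \lex_n b'$. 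By the first clause of \cref{commexthm}, $\eta_n(a') \odot \eta_n(b')$ holds in $A_{n+1}$, and since commeasurability is preserved by all the remaining structure maps into the colimit, $a \odot b$ holds in $\Alexstar$. This establishes $\lex \subseteq \odot$ in $\Alexstar$, as required, and completes the construction of the left adjoint $\Exc$.
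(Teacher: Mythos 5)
Your route is genuinely different from the paper's. The paper constructs $\Alexstar$ in a \emph{single} step: it reruns the inductive generators-and-relations construction of \cref{commexthm} with one extra inference rule, ``from $u \meet t \equiv u$ and $v \meet \neg t \equiv v$ infer $u \odot v$''. With that rule, LEP holds in the quotient by construction (any witness of exclusivity between classes yields derivations of the rule's premises, hence of commeasurability), and the universal property is proved by the same structural induction as in \cref{commexthm}, the new rule being validated in any $B$ satisfying LEP. You instead iterate $A \mapsto \Alex$ along an $\omega$-chain and pass to the colimit. Your cocone construction and uniqueness argument are sound: morphisms preserve $\leq$ and hence $\lex$, so LEP in $B$ legitimises each extension step, and stage-wise uniqueness plus uniqueness of the colimit mediating map gives global uniqueness. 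It is worth noting that whether $\Alexstar$ coincides with the result of iterating $\Alex$ to a fixpoint is explicitly posed as an \emph{open question} in the paper's discussion section (and used as a conjecture in \cref{ssec:tensor_ks}); so your argument, if completed, proves strictly more than the theorem.

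That ``if completed'' is where the gap lies. Your LEP step rests on an ``explicit, filtered description of colimits in $\pba$ underlying \cref{commexthm} and its coequaliser variant'': that every element of $\varinjlim_n A_n$, every instance of $\odot$, and every equation is witnessed at some finite stage. No such description exists in the paper. The paper's colimits are obtained either from the Adjoint Functor Theorem or by generators and relations (the coequaliser variant $B[\ocirc,\equiv]$), and for a presentation by generators and relations it is not automatic that the relations holding in the quotient are exactly those forced at a finite stage: derivations can interleave material from all stages, and conservativity claims of this sort are precisely what requires proof. Since the finite-stage reflection of $a \lex b$ is the crux of your argument that $\omega$ steps suffice, you must prove it. The way to do so is to bypass the paper's machinery and construct the chain colimit concretely: take the direct limit of the underlying sets, declare $[a] \odot [b]$ iff representatives are commeasurable at some stage, and compute operations stage-wise; verify this is a partial Boolean algebra (given a pairwise-commeasurable subset $S$ of the limit, its closure under the Boolean operations is pairwise commeasurable and Boolean, because any finitely many elements involved become pairwise commeasurable at a common finite stage and hence lie in a Boolean subalgebra there); and verify it has the universal property of the colimit in $\pba$. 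With that lemma in place, your proof goes through; without it, the key step is an assertion, not an argument.
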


The proof of this result follows from a simple adaptation of the proof of \cref{commexthm}, namely
adding the following rule to the inductive system presented in \cref{tab:rules}:
\[ \rinfer{u \meet t \equiv u, \; v \meet \neg t \equiv v}{u \odot v} \]

This rule will enforce the logical exclusivity principle,
and the universal property is proved in a manner similar to the proof of \cref{commexthm}.

\section{Tensor products of partial Boolean algebras}
\label{sec:tensor}

\subsection{A (first) tensor product by generators and relations}
\label{ssec:firsttensor}

In \cite{van2012noncommutativity}, it is shown that
$\pba$ has a monoidal structure, with $A \otimes B$ given by the colimit of the family
of Boolean algebras $C + D$, as $C$ ranges over Boolean subalgebras of $A$, $D$ ranges over Boolean subalgebras of $B$, and $+$ denotes the coproduct of Boolean algebras.

The tensor product in \cite{van2012noncommutativity} is not constructed explicitly: it relies on the existence of coequalisers in $\pba$, which is proved by an appeal to the Adjoint Functor Theorem.

Our \cref{commexthm}
allows us to give an explicit description of this construction using generators and relations.

\begin{proposition}
Let $A$ and $B$ be partial Boolean algebras.
Then 
\[A \otimes B \;\cong\; (A \oplus B)[\obar] \Mcomma\]
where $\obar$ is the relation on the carrier set of $A \oplus B$ 
given by $\inl(a) \obar \inr(b)$ for all $a \in A$ and $b \in B$.
\end{proposition}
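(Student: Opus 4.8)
The plan is to show that $A \otimes B$ and $(A \oplus B)[\obar]$ both represent the same $\mathbf{Set}$-valued functor on $\pba$, and then to conclude by uniqueness of representing objects. Concretely, for a partial Boolean algebra $M$, call a pair of $\pba$-morphisms $\fdec{f}{A}{M}$ and $\fdec{g}{B}{M}$ \emph{cross-commeasurable} if $f(a) \odot_M g(b)$ for all $a \in A$ and $b \in B$. I will argue that morphisms out of either object into $M$ are in natural bijection with cross-commeasurable pairs, i.e.\ that both objects represent
\[ M \longmapsto \setdef{(f,g)}{\fdec{f}{A}{M},\ \fdec{g}{B}{M},\ f(a) \odot_M g(b) \text{ for all } a \in A,\, b \in B}. \]

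For $(A \oplus B)[\obar]$ this is almost immediate. By the universal property of the coproduct, $\pba$-morphisms $A \oplus B \to M$ correspond bijectively to arbitrary pairs $(f,g)$ with $\fdec{f}{A}{M}$ and $\fdec{g}{B}{M}$, via $[f,g](\inl(a)) = f(a)$ and $[f,g](\inr(b)) = g(b)$. Composing with $\eta$ and applying \cref{commexthm}, morphisms $(A \oplus B)[\obar] \to M$ correspond exactly to those pairs for which $[f,g]$ sends $\obar$-related elements to commeasurable ones. Since $\inl(a) \obar \inr(b)$ holds for all $a,b$, this condition is precisely cross-commeasurability.

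The substantive step is to derive the same universal property from the colimit definition of $A \otimes B$, that is, to identify cocones from the diagram $\family{C + D}$ (indexed by pairs of Boolean subalgebras $C \subseteq A$, $D \subseteq B$, ordered by inclusion in both coordinates) to $M$ with cross-commeasurable pairs. Given a cocone $\family{h_{C,D}}$, I first restrict each $h_{C,D}$ along the canonical inclusions $C \hookrightarrow C + D$ and $D \hookrightarrow C + D$. Since the trivial algebra $\Two$ is a common lower bound of all index pairs sharing their first (respectively second) coordinate, cocone compatibility forces $h_{C,D}|_C$ to be independent of $D$ and $h_{C,D}|_D$ to be independent of $C$. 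These restrictions then form cocones over the diagrams $\CA$ and $\C(B)$ of Boolean subalgebras, and by the Heunen--van den Berg theorem that every partial Boolean algebra is the colimit in $\pba$ of its Boolean subalgebras, they assemble into unique morphisms $\fdec{f}{A}{M}$ and $\fdec{g}{B}{M}$. As $C$ and $D$ are commeasurable inside the Boolean algebra $C + D$ and $h_{C,D}$ preserves commeasurability, the pair $(f,g)$ is cross-commeasurable. Conversely, a cross-commeasurable pair $(f,g)$ yields, for each $(C,D)$, Boolean homomorphisms $f|_C$ and $g|_D$ into $M$ with pairwise-commeasurable images; these images lie in a common Boolean subalgebra of $M$, so the coproduct universal property in $\ba$ extends them to a unique $\fdec{h_{C,D}}{C+D}{M}$, and one checks these are compatible, forming a cocone. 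The two assignments are mutually inverse, giving the required bijection.

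With both objects shown to represent the cross-commeasurability functor, the Yoneda lemma yields a canonical isomorphism $A \otimes B \cong (A \oplus B)[\obar]$, compatible with the structure maps $A, B \to A \otimes B$ and the composites of $\eta$. I expect the main obstacle to be exactly the cocone analysis above: in particular, the indexing poset of the defining diagram is \emph{not} filtered, since two Boolean subalgebras need not generate a commeasurable (hence Boolean) subalgebra, which is why the independence-of-coordinate argument must route through the trivial subalgebra $\Two$ as a common lower bound rather than through joins of subalgebras.
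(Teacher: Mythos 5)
Your proof is correct, and its overall shape matches the paper's: both arguments exhibit $A \otimes B$ and $(A \oplus B)[\obar]$ as representing the same functor of ``cross-commeasurable'' pairs $(f,g)$, and conclude by uniqueness of representing objects. The difference lies in what is proved rather than cited. The paper's entire proof is a comparison of the universal property from \cref{commexthm} (which, composed with the coproduct property, is exactly your first paragraph) against the universal property of the colimit tensor product, which it simply quotes from \cite[Proposition~30]{van2012noncommutativity}. Your second paragraph re-derives that cited proposition from the raw colimit definition, and does so correctly; the key points are precisely the ones you isolate, namely that the index poset of pairs $(C,D)$ is not directed, so independence of $h_{C,D}|_C$ from $D$ must be extracted via the common lower bounds $(C,\Two)$, after which the restrictions assemble into $\fdec{f}{A}{M}$ and $\fdec{g}{B}{M}$ through the Heunen--van den Berg theorem that $A$ is the colimit of $\CA$. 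What your route buys is self-containedness --- the proposition no longer leans on an external result --- at the cost of length; the paper's route buys a one-line proof. Two small repairs: $\Two$ is the two-element Boolean algebra, not ``the trivial algebra'' (in this paper $\One$ is the trivial one), and your lower-bound argument presumes $\Two$ is a Boolean subalgebra of $A$ and of $B$, which fails exactly when $A$ or $B$ is $\One$; in that degenerate case both sides of the isomorphism collapse to $\One$, so the statement still holds. Finally, your several ``one checks'' steps (uniqueness of the extension $h_{C,D}$, cocone compatibility, mutual inversion) all reduce to one lemma worth stating explicitly: a $\pba$-morphism out of a Boolean algebra preserves all (total) Boolean operations, since its image is pairwise commeasurable, and hence is determined by its values on any generating set.
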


This can be verified by comparing the universal property from \cref{commexthm} with   \cite[Proposition 30]{van2012noncommutativity}.

\subsection{A more expressive tensor product}
\label{ssec:leptensor}

There is a lax monoidal functor $\fdec{\Proj}{\Hilb}{\pba}$, which takes a Hilbert space to its projectors, viewed as constituting a partial Boolean algebra. The coherence morphisms $\PH \otimes \PK \to \PHoK$ are induced by the evident embeddings of $\PH$ and $\PK$ into $\PHoK$, given by $p \longmapsto p \otimes 1$, $q \longmapsto 1 \otimes q$.

It is easy to see that such morphisms are far from being isomorphisms. For example, if $\HH = \KK = \CC^2$, then there are (many) morphisms from $A = \Proj(\CC^2)$ to $\Two$, which lift to morphisms from $A \otimes A$ to $\Two$. However, by the Kochen--Specker theorem, there is no such morphism from $\Proj(\CC^4) = \Proj(\CC^2 \otimes \CC^2)$.

Interestingly, in \cite{kochen2015reconstruction} it is shown that the images of $\PH$ and $\PK$, for any finite-dimensional $\HH$ and $\KK$, generate $\PHoK$. This is used in \cite{kochen2015reconstruction} to justify the claim contradicted by the previous paragraph. 
The gap in the argument is that more relations hold in $\PHoK$ than in $\PH \otimes \PK$.
Nevertheless, this result is very suggestive.  In standard Boolean algebra theory, these images would satisfy the criteria for $\PHoK$ being the ``internal sum'' of $\PH$ and $\PK$ \cite{givant2008introduction}.
Evidently, for partial Boolean algebras, these criteria are no longer sufficient. This poses the challenge of finding stronger criteria, and a stronger notion of tensor product to match.

An important property satisfied by the rules in \cref{tab:rules} as applied in constructing $A \otimes B$ is that, if $t\ex$ can be derived, then $u\ex$ can be derived for every subterm $u$ of $t$.
This appears to be too strong a constraint to capture the full logic of the Hilbert space tensor product.

To see why this is an issue, consider projectors $p_1 \otimes p_2$ and $q_1 \otimes q_2$. To ensure in general that they commute, we need the conjunctive requirement that $p_1$ commutes with $q_1$ \emph{and} $p_2$ commutes with $q_2$. However, to show that they are \emph{orthogonal}, we have a disjunctive requirement: $p_1 \bot q_1$ \emph{or} $p_2 \bot q_2$. If we establish orthogonality in this way, we are entitled to conclude that  $p_1 \otimes p_2$ and $q_1 \otimes q_2$ are commeasurable, even though (say) $p_2$ and $q_2$ are not.
Indeed, the idea that propositions can be defined on quantum systems even though subexpressions are not is emphasised in \cite{kochen2015reconstruction}.

This leads us to define a stronger tensor product by forcing logical exclusivity to hold in the tensor product from \cite{van2012noncommutativity}. This amounts to composing with the reflection to $\epba$;
$\boxtimes \;\defeq\; \Exc \;\circ\; \otimes$.
Explicitly, we define the logical exclusivity tensor product by
\[ A \boxtimes B = (A \otimes B)[\lex]^*  = (A \oplus B)[\obar][\lex]^* . \]

This is sound for the Hilbert space model.  
More precisely, $\Proj$ is still a lax monoidal functor with respect to this tensor product. It remains to be seen how close it gets us to the full Hilbert space tensor product.

\subsection{Commeasurability extensions, Kochen--Specker, and Hilbert space tensor product}
\label{ssec:tensor_ks}

We can ask generally if extending commeasurability by some relation $R$ can induce the Kochen--Specker property in $A[R]$ when it did not hold in $A$. In fact, it is easily seen that this can never happen.
\begin{theorem}[K--S faithfulness of extensions]
\label{KSfaith}
Let $A$ be a partial Boolean algebra, and $R \subseteq A^2$ a relation on $A$. Then $A$ has the K--S property if and only if $A[R]$ does.
\end{theorem}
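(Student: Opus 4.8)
The plan is to use the characterisation of the K--S property as the nonexistence of a $\pba$-morphism to $\Two$, together with the universal property of $A[R]$ from \cref{commexthm}. The crucial observation is that $\Two$ is a \emph{total} Boolean algebra, so any two of its elements are commeasurable. Consequently, for any $\pba$-morphism $\fdec{h}{A}{\Two}$, the side condition $(a,b) \in R \IMP h(a) \odot_{\Two} h(b)$ needed to invoke the universal property of $A[R]$ holds \emph{automatically}, whatever the relation $R$ is. This is the only real content of the proof.

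I would then establish the equivalent statement that $A$ admits a $\pba$-morphism to $\Two$ if and only if $A[R]$ does, and negate it. For the forward implication, given $\fdec{h}{A}{\Two}$, the observation above lets me apply \cref{commexthm} to obtain $\fdec{\hat{h}}{A[R]}{\Two}$ with $h = \hat{h}\circ\eta$; in particular $A[R]$ has a morphism to $\Two$. For the converse, given any $\fdec{g}{A[R]}{\Two}$, the composite $\fdec{g\circ\eta}{A}{\Two}$ is again a $\pba$-morphism, so $A$ has a morphism to $\Two$.

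Putting these together, $A$ has a $\pba$-morphism to $\Two$ precisely when $A[R]$ does; taking the contrapositive on both sides yields that $A$ has the K--S property (no morphism to $\Two$) if and only if $A[R]$ has the K--S property, which is the assertion of the theorem.

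There is no genuine obstacle here: the entire argument rests on the remark that the totality of $\Two$ makes the hypothesis of the universal property vacuous, so freely extending commeasurability by an arbitrary $R$ can neither create nor destroy morphisms to $\Two$. One could alternatively route the proof through the colimit criterion of \cref{KScolimthm} (that the K--S property is equivalent to $A[A^2] = \One$), but the direct argument via morphisms to $\Two$ is the cleanest.
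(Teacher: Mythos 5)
Your proof is correct and follows essentially the same route as the paper's: the universal property of $A[R]$ gives the forward direction (with the side condition holding automatically because the target is a total Boolean algebra, so all its elements are commeasurable), and composition with $\eta$ gives the converse. The only cosmetic difference is that the paper phrases the argument in terms of morphisms to an arbitrary non-trivial Boolean algebra $B$, whereas you work directly with $\Two$; these are interchangeable by the discussion preceding \cref{KScolimthm}.
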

\begin{proof}
If $A$ does not have the K--S property, it has a morphism to a non-trivial Boolean algebra $B$. By the universal property of $A[R]$, there is a morphism $\fdec{\hat{h}}{A[R]}{B}$. Thus, $A[R]$ does not have the K--S property. Conversely, if there is a morphism $\fdec{k}{A[R]}{B}$ to a non-trivial Boolean algebra $B$, then $\fdec{k \circ \eta}{A}{B}$, so $A$ does not have the K--S~property.
\end{proof}

We can apply this in particular to the tensor product. 
\begin{corollary}\label{cor:tensorKS}
If $A$ and $B$ do not have the K--S property, then neither does $(A \otimes B)[\bot]^k$.
\end{corollary}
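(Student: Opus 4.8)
The plan is to track the single property that matters --- the existence of a $\pba$-morphism to $\Two$ --- through each layer of the construction, since by definition a partial Boolean algebra fails the K--S property exactly when such a morphism exists. Thus it suffices to show that if $A$ and $B$ each admit a morphism to $\Two$, then so does $(A\otimes B)[\bot]^k$. I would build the witnessing morphism step by step, following the decomposition $A\otimes B \cong (A\oplus B)[R]$ given by the generators-and-relations description of $\otimes$ established earlier (with $R$ the relation forcing commeasurability across the two factors), and then passing through the logical-exclusivity layer.

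First I would treat the coproduct. By hypothesis there are $\pba$-morphisms $f\colon A \to \Two$ and $g\colon B \to \Two$. The universal property of the coproduct $A\oplus B$ in $\pba$ yields a mediating morphism $[f,g]\colon A\oplus B \to \Two$, so $A\oplus B$ does not have the K--S property. Next, since $A\otimes B$ is, up to isomorphism, of the form $(A\oplus B)[R]$, \cref{KSfaith} applies verbatim: $A\oplus B$ lacks the K--S property iff $(A\oplus B)[R]$ does, so $A\otimes B$ admits a morphism to $\Two$ as well.

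It remains to pass through the logical-exclusivity layer $[\bot]^k$. Here I would invoke the reflection of \cref{thm:reflection} rather than \cref{KSfaith}, because the reflection is built by adjoining the extra LEP rule and so is not literally of the form $C[R]$. The key observation is that $\Two$, being a total Boolean algebra in which all elements are commeasurable, trivially satisfies LEP, that is, $\Two$ is an object of $\epba$. Consequently the morphism $A\otimes B \to \Two$ obtained above factors, through the universal property of the reflection, as a morphism $(A\otimes B)[\bot]^k \to \Two$. Hence $(A\otimes B)[\bot]^k$ has a morphism to $\Two$ and does not have the K--S property. (If instead $[\bot]^k$ is read as a finite iteration of the plain commeasurability extension $C \mapsto C[\bot]$, one simply applies \cref{KSfaith} $k$ times.)

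The argument is short because all the substantive work is already packaged in \cref{KSfaith}; the only point requiring care is the final step, where \cref{KSfaith} does not apply as stated. The mild obstacle is therefore to notice that the reflection is nonetheless K--S-faithful, which follows immediately once one records that $\Two$ lies in $\epba$, so that every morphism into it lifts along $\eta$.
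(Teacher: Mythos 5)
Your proposal is correct, and for the statement as the paper actually intends it, your parenthetical remark \emph{is} the paper's proof. Note the notation: the reflection of \cref{thm:reflection} is written $[\lex]^*$ (with a star), whereas $(A \otimes B)[\lex]^k$ (superscript $k$, no star) denotes the $k$-fold iteration of the plain commeasurability extension $C \mapsto C[\lex]$. Accordingly, the paper's argument is exactly your fallback: the coproduct property gives $A \oplus B \to \Two$, and then \cref{KSfaith} is applied $k+1$ times, once for the extension $[\obar]$ yielding $A \otimes B \cong (A\oplus B)[\obar]$ and $k$ times for the layers $[\lex]$.

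Your primary route --- treating the outer layer as the reflection and lifting the morphism $A \otimes B \to \Two$ along the unit, using the fact that $\Two$ satisfies LEP and so lies in $\epba$ --- proves a different and in fact stronger statement: the reflection $(-)[\lex]^*$ is itself K--S-faithful, hence $A \boxtimes B = (A\otimes B)[\lex]^*$ has no Kochen--Specker paradox whenever $A$ and $B$ do not. This is worth highlighting, because the paper only obtains that conclusion \emph{conditionally}, ``under the conjecture that $\Alexstar$ coincides with iterating $\Alex$ to a fixpoint''; your two-line observation (every morphism into $\Two$ factors through the LEP reflection, since $\Two \in \epba$) makes it unconditional and also sidesteps any worry about whether finitely many iterations suffice to reach the fixpoint. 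The only criticism is organisational: you should present the iterated-extension argument as the proof of the stated corollary and the reflection argument as a separate, stronger remark, rather than the reverse.
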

\begin{proof}
If $A$ and $B$ do not have the K--S property, they have morphisms to $\Two$, and hence so does $A \oplus B$.
Applying \cref{KSfaith} inductively $k+1$ times, one concludes that $(A \otimes B)[\bot]^k = (A \oplus B)[\obar][\bot]^k$ does not have the K--S property.
\end{proof}

Under the conjecture that $\Alexstar$ coincides with iterating $\Alex$ to a fixpoint, this would show that the logical exclusivity tensor product $A \boxtimes B$ never induces a Kochen--Specker paradox if none was already present in $A$ or $B$.

This can be seen as a limitative result, in the following sense. One of the key points at which non-classicality emerges in quantum theory is the passage from $\Proj(\CC^2)$, which does not have the K--S property, to $\Proj(\CC^4) = \Proj(\CC^2 \otimes \CC^2)$, which does.\footnote{Note that $\Proj(\CC^2) \cong \bigoplus_{i \in I} \Four_i$, where $I$ is a set of the power of the continuum, and each $\Four_i$ is the four-element Boolean algebra.}
By contrast, it would follow from \cref{cor:tensorKS} that $\Proj(\CC^2) \boxtimes \Proj(\CC^2)$ does not have the K--S property. Therefore, we need a stronger tensor product to track this emergent complexity in the quantum case.

\section{Discussion}
\label{sec:discussion}
A number of questions arise from the ideas developed in this paper.

\begin{itemize}
    \item First, we have shown that LEP implies PEP; that is, if a partial Boolean algebra satisfies Logical Exclusivity, then all its states satisfy Probabilistic Exclusivity. We conjecture that the converse holds.
    \begin{conjecture}
    \label{pepimpletconj}
    PEP $\IMP$ LEP.
    \end{conjecture}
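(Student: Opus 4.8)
The plan is to attack the contrapositive: assuming $A$ fails LEP, I would exhibit a single state violating PEP. By the proposition identifying LEP with transitivity (and by the cited result identifying transitivity with being an orthomodular poset), failure of LEP means there are elements with $a \lex b$ but $\neg(a \odot b)$; equivalently, $\leq$ is not a partial order on $A$. The first move is to isolate such a witnessing pair together with its exclusivity witness $c$, so that $a \leq c$ and $b \leq \neg c$ while $a$ and $b$ remain incommensurable.

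The first obstacle to record is that a single exclusive pair can never, by itself, witness a PEP violation: for any state $\nu$, the relations $a \leq c$ and $b \leq \neg c$ give $\nu(a) \leq \nu(c)$ and $\nu(b) \leq 1 - \nu(c)$, whence $\nu(a) + \nu(b) \leq 1$. Thus PEP is only nontrivial on pairwise-exclusive families of size at least three, and the genuine content of the conjecture is that a \emph{local} failure of commeasurability can always be amplified into a pairwise-exclusive family $\{a_1, \ldots, a_n\}$ that is not jointly commeasurable, carrying a state with $\sum_i \nu(a_i) > 1$.

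For the amplification I would try to extend the single non-commeasurable exclusive pair to a larger pairwise-exclusive family $\{a_1, \ldots, a_n\} \subseteq A$ that is \emph{not} jointly commeasurable, so that it escapes the bound coming from any single Boolean subalgebra, and then to construct a state placing enough mass on the $a_i$ that their probabilities sum above $1$. This is the partial-Boolean-algebra analogue of the gadgets used in the graph-theoretic approach \cite{CSW}, where violations of consistent exclusivity arise precisely from exclusivity cliques that are not realised as orthogonal outcomes of a single measurement. The free construction of \cref{commexthm} and the reflection $\Exc$ are the natural tools here, since they let one produce an ambient algebra in which such a family lives while keeping careful track of which commeasurabilities are forced and which are not.

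The main obstacle is exactly this amplification step together with the construction of an explicit violating state: it is far from clear that an arbitrary single failure $a \lex b$, $\neg(a \odot b)$ can be leveraged into a genuine pairwise-exclusive family supporting total weight exceeding $1$, and producing states on a general partial Boolean algebra while controlling their values on incommensurable elements is delicate. A promising reformulation is to seek a converse to \cref{thm:statesAlex}, showing that a state satisfies PEP \emph{exactly} when it extends along the unit $\fdec{\eta}{A}{\Alex}$; the conjecture would then reduce to proving that failure of LEP forces some state not to extend. Establishing either the gadget construction or this extension-theoretic criterion is, I expect, where the real difficulty of the conjecture is concentrated.
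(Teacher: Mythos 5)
First, a point of calibration: the paper offers no proof of this statement --- it is stated and left as an open conjecture --- so there is no ``paper proof'' for your proposal to match, and your proposal, as you yourself acknowledge in its final sentence, is a research plan rather than a proof. Within that plan, your preliminary observation is correct and genuinely useful: if $a \leq c$ and $b \leq \neg c$, then restricting a state $\nu$ to Boolean subalgebras containing $\{a,c\}$ and $\{b,\neg c\}$ gives $\nu(a) \leq \nu(c)$ and $\nu(b) \leq 1 - \nu(c)$, so a two-element exclusive family can never violate PEP; any violation must come from a pairwise-exclusive family of size at least three not contained in a single Boolean subalgebra.

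The genuine gap is that your contrapositive strategy presupposes that a partial Boolean algebra failing LEP admits states at all, let alone states whose values on prescribed incommeasurable elements can be engineered to sum above $1$. Nothing in the failure of LEP guarantees this, and the paper itself flags exactly this obstruction in its discussion: by Greechie's theorem there exist orthomodular lattices --- hence transitive partial Boolean algebras, satisfying LEP --- which admit no states whatsoever, and the paper explicitly asks whether an analogous stateless example exists that \emph{fails} LEP, noting that such an example would satisfy PEP vacuously and thereby \emph{refute} the conjecture. So the ``amplification'' step you identify as the main difficulty is not merely hard; it is the entire content of the conjecture, and it may be unachievable in principle, since the conjecture's truth hinges on state-existence questions your plan takes for granted. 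Your proposed reformulation via a converse to \cref{thm:statesAlex} does not lower the difficulty either: that converse is precisely the paper's second open conjecture, and even granting it, the remaining task --- showing that failure of LEP forces the existence of a state on $A$ that does not extend along $\fdec{\eta}{A}{\Alex}$ --- runs into the same obstruction of having to produce a state on a general partial Boolean algebra, which is exactly what can fail.
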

    
    \item Similarly, we conjecture the converse to \cref{thm:statesAlex}.
    \begin{conjecture}
    If state $\nu$ of a partial Boolean algebra $A$ satisfies PEP, then there is a state $\hat{\nu}$ of $\Alex$ such that $\nu = \hat{\nu}\circ\eta$.
    \end{conjecture}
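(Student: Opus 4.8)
The plan is to show that whenever $\nu$ satisfies PEP, the set of states on $\Alex$ that restrict to $\nu$ along $\eta$ is nonempty, and to do so I would first recast the problem as a feasibility question and reduce it to a finite one by compactness. The candidate extensions form a closed subset of the compact product space $[0,1]^{\Alex}$, since each defining condition — $\hat\nu(\eta(a)) = \nu(a)$, $\hat\nu(0)=0$, $\hat\nu(\neg x) = 1 - \hat\nu(x)$, and $\hat\nu(x \join y) + \hat\nu(x \meet y) = \hat\nu(x) + \hat\nu(y)$ for $x \odot y$ — cuts out a closed set. By the finite intersection property, this set is nonempty provided every finite subfamily of conditions is jointly satisfiable. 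Each such subfamily mentions only finitely many elements of $\Alex$, all generated by finitely many generators $\eta(a_1), \dots, \eta(a_n)$; letting $A'$ be the sub-partial-Boolean-algebra of $A$ generated by $a_1, \dots, a_n$, it suffices to extend $\nu|_{A'}$ to a state on $A'[\lex]$. Here $A'$ is finite, because commeasurability of any compound term reduces (by the rules in \cref{tab:rules}) to commeasurability of its generators, so the Boolean subalgebras of $A'$ are exactly those generated by the maximal commeasurable subsets of $\{a_1,\dots,a_n\}$, of which there are finitely many. Crucially, PEP is inherited by $A'$: any $\lex$-exclusive family in $A'$ is $\lex$-exclusive in $A$, so $\nu|_{A'}$ again satisfies PEP. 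Thus I may assume $A$ finite.

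In the finite case I would turn the extension into a finite-dimensional linear programme. By \cref{lemma:familyvaluation} together with \cref{lemma:total:valuation-distribution}, a state on the finite partial Boolean algebra $\Alex$ is precisely a family of probability distributions on the atoms of its finitely many Boolean subalgebras that agree under marginalisation along inclusions, and requiring it to restrict to $\nu$ on $\eta(A)$ adds finitely many linear equalities. The unknowns are the atom-probabilities in each context, and all constraints — nonnegativity, normalisation, marginal agreement, and recovery of $\nu$ — are linear. I would then invoke the theorem of the alternative (Farkas' lemma / LP duality): either the system is feasible, yielding the desired $\hat\nu$, or there is a dual certificate, a signed combination of the constraints witnessing infeasibility.

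The crux, and the step I expect to be the main obstacle, is converting such a certificate into a violation of PEP. The geometry that makes this plausible is that exclusivity in $A$ is recorded faithfully inside $\Alex$: if $a \lex b$, witnessed by $a \leq c$ and $b \leq \neg c$, then $\eta(a)$ and $\eta(b)$ are forced commeasurable and are in fact disjoint, since $\eta(a) \meet \eta(b) \leq \eta(c) \meet \neg\eta(c) = 0$; moreover any $\lex$-clique $S \subseteq A$ sits inside a single context of $\Alex$ as pairwise-disjoint events (this is exactly the mechanism underlying \cref{thm:statesAlex}). The hope is therefore that a Farkas certificate can be rearranged into a set $S$ of pairwise-exclusive elements of $A$ with $\sum_{a \in S}\nu(a) > 1$, contradicting PEP. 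The difficulty is that the certificate is a priori a real-weighted combination spread across several incompatible contexts of $\Alex$, whereas PEP only constrains $0/1$-weighted families lying in a common clique; extracting an integral, single-clique witness from a fractional, multi-context one is the delicate point. This is essentially the question of whether the pairwise, first-order constraints of PEP already force the global marginal consistency encoded by $\Alex$, and it is why the statement is posed as a conjecture: in the closely related setting of consistent exclusivity for empirical models, the first-order principle is known to be strictly weaker than its tensor-closed strengthenings, so one must either show that the algebraic structure of $\Alex$ generates no such higher-order obstruction or exhibit a counterexample. A complementary line I would pursue in parallel is a Zorn's-lemma one-element extension argument, building $\hat\nu$ by transfinite extension and reducing each step to checking that PEP keeps the induced lower and upper bounds on the value at a newly adjoined element consistent; this localises the same obstruction to a single extension step, which may be more tractable to analyse.
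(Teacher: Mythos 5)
You should first be aware that this statement is not proved in the paper at all: it is posed as an open conjecture in the Discussion (the converse of \cref{thm:statesAlex}), so there is no reference proof to compare against, and your proposal must stand or fall as a complete argument on its own. It does not stand: by your own admission the crux --- converting a Farkas/LP-duality certificate of infeasibility into a single-clique, $0/1$-weighted family of pairwise-exclusive elements violating PEP --- is left unresolved, and that step is not a technical loose end but the entire mathematical content of the conjecture. What you have is a (sensible) reformulation: PEP is a pairwise, first-order condition, while the existence of $\hat{\nu}$ on $\Alex$ encodes global marginal consistency across all the new contexts created by freely adjoining commeasurability; whether the former implies the latter is exactly what remains after your reductions, even assuming those reductions were sound.

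They are not sound as written, in two places. First, the sub-partial-Boolean-algebra $A'$ of $A$ generated by $a_1,\dots,a_n$ need not be finite. Your justification --- that commeasurability of compound terms reduces to commeasurability of their generators ``by the rules in \cref{tab:rules}'' --- is a property of the free algebras $A[\ocirc]$ built by that table, not of subalgebras of an arbitrary $A$; indeed \cref{ssec:leptensor} stresses that in $\Proj(\HH \otimes \KK)$ compound elements such as $p_1 \otimes p_2$ and $q_1 \otimes q_2$ can be commeasurable although their constituents are not, and each such commeasurability in $A$ lets $A'$ generate further elements, with no termination guarantee. Second, and more fundamentally, the compactness step needs a point of $[0,1]^{\Alex}$ satisfying finitely many constraints formulated in $A[\lex_A]$, whose commeasurabilities and identifications are forced by derivations that may pass through elements of $A$ far outside $A'$; likewise the witnesses $c$ for exclusivity of elements of $A'$ may lie outside $A'$, so $\lex_{A'}$ is in general strictly smaller than $\lex_A \cap (A' \times A')$. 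Consequently the canonical morphism $A'[\lex_{A'}] \to A[\lex_A]$ need be neither injective nor commeasurability-reflecting, and a state on $A'[\lex_{A'}]$ yields no assignment satisfying the chosen finite constraint set. This is precisely the obstacle the paper itself identifies: proving the conjecture requires a handle on the Boolean subalgebras of $A[\ocirc]$, which is available for the scenario algebras $\AX$ via \cref{def:cliquewise} but is lacking in general. What survives of your proposal is correct and useful --- the compactness schema (once repaired to quantify over finite constraint sets rather than finitely generated subalgebras), the observation that an $\lex$-clique of $A$ lands inside a single Boolean subalgebra of $\Alex$ as a pairwise-disjoint family, and the analogy with the known strictness of first-order consistent exclusivity, which rightly makes you treat the statement as genuinely open rather than as a routine extension problem.
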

    This would amount to generalising the universality of $\Alex$ from $\pba$-morphisms to states. It would
yield a one-to-one correspondence between states
of $A$ satisfying PEP and states of $\Alex$.
    \item Proving the conjecture above would involve extending a state on a partial Boolean algebra $A$ to a state on $A[\ocirc]$.
    A similar operation was achieved for partial Boolean algebras arising from measurement scenarios  in \cref{prop:partial:valuation-empiricalmodel}, because in that case \cref{def:cliquewise} provided a simple description of the Boolean subalgebras of $A[\ocirc]$. Is an analogous description possible for the general case considered in \cref{commexthm}, or at least for the particular case of $\Alex$?
   \item A classic result by Greechie \cite{greechie1971orthomodular} constructs a class of orthomodular lattices which admit no states. Since orthomodular lattices are 
   transitive partial Boolean algebras (see \eg \cite{van2012noncommutativity}),
   this means that there are examples of partial Boolean algebras satisfying LEP which admit no states. Is there a partial Boolean algebra not satisfying LEP which admits no states? This would provide a counter-example to \cref{pepimpletconj}.
    \item There are some technical questions relating to the $\Alexstar$ construction:
    \begin{itemize}
        \item Is it a completion (\ie is the reflector a faithful functor)?
        \item Is it the same as iterating the $\Alex$ construction to a fixpoint?
        \item Is the relation of $\Alexstar$ to $\Alex$ an instance of a more general relationship between iterating an inductive construction, and adding a rule to the inductive construction itself?
    \end{itemize}
    \item Our discussion of tensor products led us to introduce a strong tensor product of partial Boolean algebras, $A \boxtimes B$. This brings us closer to an answer to the following particularly interesting question:
    \begin{question}
    Is there a monoidal structure $\circledast$ on the category $\pba$ such that the functor $\fdec{\Proj}{\Hilb}{\pba}$ is \emph{strong monoidal} with respect to this structure, \ie such that $\PH \circledast \PK \cong \PHoK?$
    \end{question}
    
    A positive answer to this question would offer a complete logical characterisation of the Hilbert space tensor product, and provide an important step towards giving logical foundations for quantum theory in a form useful for quantum information and computation.
    
    \item We recall the following quotation from Ernst Specker given in \cite{cabello2012specker}:
    \begin{quotation}
    \noindent
    Do you know what, according to me, is the fundamental theorem of quantum
mechanics? \ldots\ That is, if you have several questions and you can answer any two of them, then
you can also answer all three of them. This seems
to me very fundamental.
    \end{quotation}
    
    This refers to the \emph{binarity} of compatibility in quantum mechanics. A set of observables is compatible if they are pairwise so. This is built into the definition of partial Boolean algebras, and it is why we only considered graphical measurement scenarios in this paper. However, in the general theory of contextuality, as developed \eg in \cite{AbramskyBrandenburger}, more general forms of compatibility are considered, represented by simplicial complexes.
    The notion of partial Boolean algebras in a broader sense introduced in \cite{czelakowski1979broadersense} seems suitable to deal with this more general format.
    How much of the theory carries over?
    \item Partial Boolean algebras capture logical structure. We have seen how this logical structure can be used to enforce strong constraints on the probabilistic behaviour of states. This is somewhat analogous to the role of possibilistic empirical models in \cite{AbramskyBrandenburger}. Can we lift the concepts and results relating to possibilistic empirical models in \cite{AbramskyBrandenburger,abramsky2016possibilities,abramsky2013relationalhidden} to the level of partial Boolean algebras?
    
    \item There is much more to be said regarding contextuality in this setting. In current work in progress, we are considering the following topics:
    \begin{itemize}
        \item A hierarchy of logical contextuality properties generalising those studied in \cite{AbramskyBrandenburger}.
        \item A systematic treatment of ``Kochen--Specker paradoxes'', \ie contradictory statements which can be validated in partial Boolean algebras.
        \item Constructions that transform state-dependent to state-independent forms of contextuality.
    \end{itemize}
\end{itemize}

\bibliography{logicontext}

\end{document}